\title[]{Clebsch Canonization of Lie--Poisson Systems}
\author{Buddhika Jayawardana}
\address{Department of Mathematical Sciences, The University of Texas at Dallas, 800 W Campbell Rd, Richardson, TX 75080-3021, USA}
\email{Buddhika.Jayawardana@utdallas.edu}
\author{Philip J. Morrison}
\address{Department of Physics and Institute for Fusion Studies, The University of Texas at Austin, Austin, TX, 78712, USA}
\email{morrison@physics.utexas.edu}
\author{Tomoki Ohsawa}
\address{Department of Mathematical Sciences, The University of Texas at Dallas, 800 W Campbell Rd, Richardson, TX 75080-3021, USA}
\email{tomoki@utdallas.edu}
\date{\today}
\dedicatory{Dedicated to Professor Anthony Bloch on the occasion of his 65th birthday}
\keywords{canonization; Lie--Poisson equation; collectivization; momentum maps; Lie--Poisson integrator}
\subjclass[]{37J37, 37M15, 53D20, 65P10, 70G65, 70H33}
\newcommand\centerofmass{%
    \tikz[radius=0.4em] {%
        \fill (0,0) -- ++(0.4em,0) arc [start angle=0,end angle=90] -- ++(0,-0.8em) arc [start angle=270, end angle=180];%
        \draw (0,0) circle;%
    }%
  }
\theoremstyle{plain}
\newtheorem{theorem}{Theorem}
\newtheorem*{theorem*}{Theorem}
\newtheorem{lemma}[theorem]{Lemma}
\newtheorem{proposition}[theorem]{Proposition}
\theoremstyle{definition}
\theoremstyle{remark}
\def\od#1#2{\frac{d#1}{d#2}}
\def\pd#1#2{\frac{\partial #1}{\partial #2}}
\def\tpd#1#2{\partial #1/\partial #2}
\def\dzero#1#2{\left.\od{}{#1} #2 \right|_{#1=0}}
\def\parentheses#1{{\left(#1\right)}}
\def\brackets#1{{\left[#1\right]}}
\def\anglebrackets#1{{\left<#1\right>}}
\def\tr{\mathop{\mathrm{tr}}\nolimits}
\def\Span{\mathop{\mathrm{span}}\nolimits} 
\def\norm#1{{\left\|#1\right\|}}
\def\DS{\displaystyle}
\def\R{\mathbb{R}}
\def\C{\mathbb{C}}
\def\defeq{\mathrel{\mathop:}=}
\def\eqdef{=\mathrel{\mathop:}}
\def\setdef#1#2{{\left\{ #1 \ |\ #2 \right\}}}
\def\ip#1#2{{\left\langle#1,#2\right\rangle}}
\def\diag{\operatorname{diag}}
\newcommand{\im}{\operatorname{im}}
\def\SO{\mathsf{SO}}
\def\SE{\mathsf{SE}}
\def\gl{\mathfrak{gl}}
\def\orth{\mathfrak{o}}
\def\u{\mathfrak{u}}
\def\se{\mathfrak{se}}
\def\so{\mathfrak{so}}
\def\sp{\mathfrak{sp}}
\def\sym{\mathsf{sym}}
\newenvironment{tbmatrix}{\left[\begin{smallmatrix}}{\end{smallmatrix}\right]}
\def\d{{\bf d}}
\def\PB#1#2{\left\{#1,#2\right\}}
\def\F{\mathbb{F}}
\newcommand\Ad{\operatorname{Ad}}
\newcommand\ad{\operatorname{ad}}
\def\bOmega{\boldsymbol{\Omega}}
\def\bPi{\boldsymbol{\Pi}}
\def\bP{\mathbf{P}}
\def\bGamma{\boldsymbol{\Gamma}}
\begin{document}

\footskip=.6in

\begin{abstract}
  We propose a systematic procedure called the Clebsch canonization for obtaining a canonical Hamiltonian system that is related to a given Lie--Poisson equation via a momentum map.  We describe both coordinate and geometric versions of the procedure, the latter apparently for the first time.  We also find another momentum map so that the pair of momentum maps constitute a dual pair under a certain condition.  The dual pair gives a concrete realization of what is commonly referred to as collectivization of Lie--Poisson systems.  It also implies that solving the canonized system by symplectic Runge--Kutta methods yields so-called collective Lie--Poisson integrators that preserve the coadjoint orbits and hence the Casimirs exactly.  We give a couple of examples, including the Kida vortex and the heavy top on a movable base with controls, which are Lie--Poisson systems on $\mathfrak{so}(2,1)^{*}$ and $(\mathfrak{se}(3) \ltimes \mathbb{R}^{3})^{*}$, respectively.
\end{abstract}

\maketitle

\section{Introduction}
\subsection{The Lie--Poisson Dynamics}
The formalization of mechanics by Lagrange and Hamilton evolved in the 19th century into the  description of dynamical systems  where the  equations of motion are generated by  canonical Poisson brackets,  written in terms of canonical coordinates,  position and momenta,  with  a Hamiltonian function. More modern differential geometric descriptions of Hamiltonian systems occurred well into the 20th century by, e.g.,  \citet{Ma1963} and \citet{Jo1964}, motivating present day symplectic geometry. 

Less well-known is Poisson geometry.
Its origins date back to \citet{li1890} in 1890, but it was brought into modern geometric form with contributions from \citet{So1970} and others including the seminal paper of \citet{We1983}.
Like the canonical Poisson brackets of symplectic geometry, noncanonical Poisson brackets of Poisson geometry are binary operations on smooth phase space functions constituting a Lie algebra realization, but explicit reference to canonical coordinates is removed and degeneracy is allowed.
A manifold with such a Poisson bracket is a generalization of the symplectic manifold called a Poisson manifold.
Noncanonical Poisson brackets, including the present day coordinate-free axioms, were present in the theoretical physics  community in the mid 20th century  in e.g.~the works of \citet{Di1950,Pa1953,Ma1959,Su1964}.

A special kind of noncanonical Poisson bracket, the Lie--Poisson bracket, has explicit linear dependence on the phase space coordinates and is intimately related to a Lie algebra.
Lie--Poisson dynamics---dynamics generated by Lie--Poisson brackets---is ubiquitous as basic equations of physics.
It is this kind of dynamics that is the subject of the present paper. 

An important example of Lie--Poisson dynamics is  given by Euler's equations for rigid body dynamics with a Lie--Poisson bracket based on the Lie algebra of infinitesimal rotations  \cite{Ma1959} (see also \cite{Su1964,SuMu1974}).
This example often serves as inspiration for generalization and exploration of new concepts.
The Lie--Poisson bracket for the full ideal fluid including magnetohydrodynamics was given in \citet{MoGr1980}; see also \citet{ViDo1978}.
This example was followed by the Lie--Poisson formulation of the Maxwell--Vlasov system of equations in  \citet{Mo1980}, with a correction given  in \citet{WeMo1981} and \citet{MaWe1982} and a limitation to the correction pointed out in \citet{Mo1982}, which was followed up more recently in \citet{Mo2013, HeMo2020} and then in \citet{LaSaWe2019}.
Another example from  the mid 1980s is that given in \citet{MaMoWe1984},  where the Lie--Poisson bracket for general moment closures of  the kinetic hierarchy were given.
There is now a large literature with very many subsequent publications that can be found, e.g., in  \citet{Mo1998,Mo2006} and \citet{ArKh1998}. 

Given the ubiquity of the Lie--Poisson form, it is natural to inquire about its origin.
One thread extends back to the quasi-coordinate description of \citet{Po1901b} (see also \citet{Ha1904}), where Euler's equations for rigid body dynamics, and its Lagrangian counterpart---the Euler--Poincar\'e equation---was first formulated on a general Lie algebra.
This idea was applied to fluid dynamics in \citet{Ar1966,Ar1969} where Euler's equations for the incompressible fluid are seen to be the Euler--Poincar\'e equation on the Lie algebra of a diffeomorphism group, putting the work of \citet{La1788} into modern language. (See \citet{MoAnPe2020} for commentary.)
Although these works did not explicitly give the Lie--Poisson bracket, the equations of motion for a reduced dynamics were obtained. 

The main geometrical idea behind the Lie--Poisson brackets for the dynamics of rigid body, fluids, and plasmas is now understood as a process of reduction from canonical to noncanonical Hamiltonian form as follows  (see, e.g., \citet[Chapter~13]{MaRa1999}): The configuration space of the systems is a Lie group $\mathsf{G}$, and the basic equation of the system is a canonical  Hamiltonian system defined on the cotangent bundle $T^{*}\mathsf{G}$.
However, the Hamiltonian has $\mathsf{G}$-symmetry, and thus one may reduce the system to the dual $\mathfrak{g}^{*}$ of the Lie algebra $\mathfrak{g}$ of $\mathsf{G}$.
The resulting equation on $\mathfrak{g}^{*}$  has  Lie--Poisson form.

There are also examples where the system is defined on a Lie group $\mathsf{G}$, but the symmetry of the system is broken.  A well-known example is the heavy top, where the symmetry is broken by the gravity;  another is the  compressible fluid, where density plays a  role similar to gravity for the heavy top case.
In either case, it is known that one can still recover the full symmetry by extending the configuration space to a semidirect product $\mathsf{G} \ltimes V$ using a $\mathsf{G}$-representation on a vector space $V$; see, e.g., \citet{MaRaWe1984a,MaRaWe1984b,HoMaRa1998a}.
As a result, one again obtains a Lie--Poisson system on the dual of the semidirect product Lie algebra $\mathfrak{g} \ltimes V$, which is a special case of Lie--Poisson brackets based on Lie algebra extensions \cite{ThMo2000} that occur in a variety of physical systems including magnetohydrodynamics (see \citet{MaMo1984}).

\subsection{Collectivization}
Another class of Lie--Poisson systems arises as a result of  so-called \textit{collectivization} in the sense of \citet{GuSt1980} (see also \citet{HoMa1983} and \citet[Section~28]{GuSt1990}).
Given a Poisson manifold $P$ and an equivariant momentum map $\mathbf{M}\colon P \to \mathfrak{g}^{*}$ associated with an action of a Lie group $\mathsf{G}$ on $P$, one can show that $\mathbf{M}$ is a Poisson map with respect to the Poisson bracket on $P$ and the Lie--Poisson bracket on $\mathfrak{g}^{*}$; see, e.g., \citet[Theorem~12.4.1]{MaRa1999}.
This implies the following:
Given that Hamiltonian $H\colon P \to \R$ is \textit{collective} in the sense that there exists $h \colon \mathfrak{g}^{*} \to \R$ such that $H = h \circ \mathbf{M}$, the flow $\Phi_{t}$ of the Hamiltonian vector field on $P$ defined by $H$ and the flow $\phi_{t}$ of the Lie--Poisson dynamics on $\mathfrak{g}^{*}$ defined by $h$ are related by $\mathbf{M}$ as $\mathbf{M} \circ \Phi_{t} = \phi_{t} \circ \mathbf{M}$.

The term ``collective'' comes from the motivating examples of \citet{GuSt1980,GuSt1990}) such as the liquid drop model in nuclear physics, where one seeks a set of equations that describe aggregate motions of a number of particles ``as if it were a rigid body or liquid drop''; the idea behind this dates back to \citet{Ri1860} (see also \citet{Ro1988} and \citet{MoLeBi2009}).

\subsection{Clebsch Canonization and Collectivization}
What we refer to as Clebsch canonization or just ``\textit{canonization}'' for short  is the opposite of the collectivization described above: One first has a Lie--Poisson equation on $\mathfrak{g}^{*}$, and then constructs a cotangent bundle $T^{*}Q$ and an equivariant momentum map $\mathbf{M}\colon T^{*}Q \to \mathfrak{g}^{*}$ so that solutions of the new \textit{canonical} Hamiltonian dynamics on $T^{*}Q$ can be mapped by $\mathbf{M}$ to those of the Lie--Poisson dynamics on $\mathfrak{g}^{*}$.

This theoretical concept is motivated by the early use of potentials for describing the velocity field of fluid mechanics:  long before the introduction of the vector potential for representing a magnetic field, researchers considered various potential representations of velocity fields, the most famous  of which  is due to Clebsch \cite{Cl1857,Cl1859}.
The connection between the Lie--Poisson brackets for fluid dynamics and the canonical Hamiltonian description in terms of  the Clebsch representation was first  given in \citet{Mo1981,Mo1982,MoGr1982}, while two-dimensional vortex dynamics was considered later in \citet{MaWe1983}.
See also \citet{Oh2019e} for the Clebsch representation of the heavy top dynamics.
A general theory for Lie--Poisson brackets, motivated by \cite{Mo1981} was given in  \citet{Mo1998} and the present work places this in the geometric setting described above.

\subsection{Lie--Poisson Integrators}
Compared to symplectic integrators for canonical Hamiltonian systems (see, e.g., \citet{HaLuWa2006} and \citet{LeRe2004}), integrators for Lie--Poisson equations seem to be studied less extensively.
Some earlier works include \citet{GeMa1988} and \citet{ChSc1991}, and are based on generating functions.
\citet{EnFa2001} used Lie group methods by exploiting the property that Lie--Poisson dynamics evolves on coadjoint orbits on $\mathfrak{g}^{*}$.
More recently, \citet{MaRo2010} developed a variational integrator for the Lie--Poisson equation by discretizing the corresponding variational principle.
See also \citet{Ma-LPI} for a more recent survey of Lie--Poisson integrators.

Our work gives a concrete realization of the general theory of \textit{collective integrators} developed by \citet{McMoVe2014}; see also \citet{McMoVe2015,McMoVe2016}.
The main advantage of collective integrators is that one can construct Lie--Poisson integrators that preserve the coadjoint orbits out of existing symplectic integrators.
On the other hand, the main disadvantage is that it is not always clear how one can find a suitable cotangent bundle $T^{*}Q$ and momentum map $\mathbf{M}$.

It is important that the symplectic integrator ``descends''~\cite{McMoVe2014} to a Lie--Poisson integrator that preserves the coadjoint orbits.
One can show that this is the case with the symplectic Runge--Kutta method if, for example, one can find another momentum map $\mathbf{J}$ on $T^{*}Q$ so that the pair of momentum maps $\mathbf{M}$ and $\mathbf{J}$ constitute a dual pair, as discussed in \cite[Theorem~7]{McMoVe2014}.
Existing constructions (see, e.g., \citet{McMoVe2014,McMoVe2015,McMoVe2016}) of such momentum maps $\mathbf{M}$ and $\mathbf{J}$ are rather ad-hoc, and thus are limited to Lie--Poisson equations on relatively simple spaces such as $\orth(p, q, \F)$, $\sp(2k, F)$, $\gl(n, \F)$, $\u(p, q)$ with $\F = \R, \C, \mathbb{H}$, and some semi-direct products.

\subsection{Main Result and Outline}
We propose a systematic canonization that potentially works for a wider class of Lie--Poisson equations by constructing a momentum map $\mathbf{M}\colon T^{*}\mathfrak{g} \to \mathfrak{g}^{*}$; hence the Lie--Poisson equation on $\mathfrak{g}^{*}$ is ``canonized'' to a canonical Hamiltonian system $T^{*}\mathfrak{g} \cong T^{*}\R^{n}$ with $n \defeq \dim\mathfrak{g}$.
We first show how this works in coordinate calculations in \Cref{sec:canonization}.

In \Cref{sec:geometry}, we give a geometric interpretation of this setting.
We also find a Lie subalgebra $\mathfrak{h}$ of $\sp(2n,\R)$ that characterizes the intrinsic symmetry of the canonized Hamiltonian system (or the \textit{canonized system} for short).
Its action on $T^{*}\mathfrak{g}$ gives rise to another momentum map $\mathbf{J}\colon T^{*}\mathfrak{g} \to \mathfrak{h}^{*}$ that becomes invariants of the canonized system.
We then prove in \Cref{thm:dual_pair} that the momentum maps $\mathbf{M}$ and $\mathbf{J}$ constitute a dual pair (in the sense of \citet{We1983}) under a certain condition.

\Cref{sec:properties} addresses the invariants of the canonized system.
For any (real) Lie algebra $\mathfrak{g}$, the momentum map $\mathbf{J}$ has at least two components including an invariant associated with the Killing form on $\mathfrak{g}$.
Additionally, if $\mathfrak{g}$ is semisimple, then there is another invariant associated with the Killing form.
Furthermore, we show that if the Lie--Poisson bracket on $\mathfrak{g}^{*}$ possesses a Casimir then there is a corresponding Noether-type invariant (momentum map) in the canonized system as well.

In \Cref{sec:integrators}, we first briefly review the idea of the collective integrators, and then show some numerical results.
Assuming the dual pair from \Cref{thm:dual_pair}, symplectic Runge--Kutta methods applied to our canonized system yields Lie--Poisson integrators that preserve the coadjoint orbits and hence the Casimirs exactly.
We demonstrate it using a couple of examples: the Kida vortex~\cite{Ki1981} (see also \citet{MeMoFl1997}) and the heavy top on a movable base with a stabilizing control~\cite{CoOh-EPwithBSym1}.

\section{Clebsch Canonization}
\label{sec:canonization}
\subsection{Lie--Poisson Bracket}
Let $\mathfrak{g}$ be an $n$-dimensional Lie algebra, and $\{ E_{i} \}_{i=1}^{n}$ be a basis for it with the structure constants $\{ c_{ij}^{k} \}_{1\le i,j,k\le n}$, i.e., $[E_{i}, E_{j}] = c_{ij}^{k} E_{k}$; note that we use Einstein's summation convention throughout the paper.
We may then define the dual basis $\{ E_{*}^{i} \}_{i=1}^{n}$ for $\mathfrak{g}^{*}$ by setting $\ip{ E_{*}^{i} }{ E_{j} } = \delta^{i}_{j}$ under the standard dual pairing $\ip{\,\cdot\,}{\,\cdot\,}\colon \mathfrak{g}^{*} \times \mathfrak{g} \to \R$.

For any smooth $f \colon \mathfrak{g}^{*} \to \R$, we define the derivative $Df(\mu) \in \mathfrak{g}$ evaluated at $\mu \in \mathfrak{g}^{*}$ so that, for any $\delta\mu \in \mathfrak{g}^{*}$,
\begin{equation*}
  \ip{ \delta\mu }{ Df(\mu) }
  = \dzero{s}{ f(\mu + s\delta\mu) }.
\end{equation*}
This results in the coordinate expression
\begin{equation*}
  Df(\mu) = \pd{f}{\mu_{i}}(\mu)\, E_{i}.
\end{equation*}
Then one defines the $(+)$-Lie--Poisson bracket (see \Cref{ssec:(-)LPB} for the $(-)$-Lie--Poisson bracket) on $\mathfrak{g}^{*}$ as follows:
For any $f, g \colon \mathfrak{g}^{*} \to \R$,
\begin{equation}
  \label{eq:LPB+}
  \PB{f}{g}_{+}(\mu) \defeq \ip{ \mu }{ \brackets{ Df(\mu), Dg(\mu) } }
  = \mu_{k} c^{k}_{ij} \pd{f}{\mu_{i}} \pd{g}{\mu_{j}}.
\end{equation}

The Lie--Poisson equation for a Hamiltonian $h \colon \mathfrak{g}^{*} \to \R$ is the Hamiltonian system defined using the above Poisson bracket, i.e.,
\begin{subequations}
  \label{eq:LP+}
  \begin{equation}
    \dot{\mu}_{i} = \PB{\mu_{i}}{h}_{+}
    = \mu_{k} c^{k}_{ij} \pd{h}{\mu_{j}},
  \end{equation}
  or equivalently,
  \begin{equation}
    \dot{\mu} = -\ad_{Dh(\mu)}^{*}\mu.
  \end{equation}
\end{subequations}

\subsection{Clebsch Canonization in Coordinates}
\label{ssec:CcanonizationC}
The main idea of the Clebsch canonization (see \citet{Mo1981,Mo1998}) is the following: Given an $n$-dimensional Lie--Poisson bracket~\eqref{eq:LP+}, we would like to find a corresponding $2n$-dimensional canonical Hamiltonian system.
In other words, we would like to find a relationship between the Poisson bracket~\eqref{eq:LPB+} and the canonical Poisson bracket of the form
\begin{equation}
  \label{eq:canPB}
  \PB{F}{G} = \pd{F}{q^{i}} \pd{G}{p_{i}} - \pd{G}{q^{i}} \pd{F}{p_{i}},
\end{equation}
where $F, G \colon T^{*}\R^{n} \to \R$.

Suppose that $\mu = \mu_{i} E_{*}^{i} \in \mathfrak{g}^{*}$ and $(q,p) \in T^{*}\R^{n}$ are related as follows:
\begin{equation}
  \label{eq:mu-qp}
  \mu_{i} = c^{k}_{ij} q^{j} p_{k}.
\end{equation}
For any smooth $f, g \colon \mathfrak{g}^{*} \to \R$, we may define $F, G \colon T^{*}\R^{n} \to \R$ by setting $F(q,p) \defeq f(\mu)$ where $\mu$ and $(q,p)$ are related as above; similarly for $G$ as well.
Then, by the chain rule, we have
\begin{equation*}
  \pd{F}{q^{i}} = \pd{f}{\mu_{j}} \pd{\mu_{j}}{q^{i}} = \pd{f}{\mu_{j}} c^{k}_{ji} p_{k},
  \qquad
  \pd{F}{p_{i}} = \pd{f}{\mu_{j}} \pd{\mu_{j}}{p_{i}} = \pd{f}{\mu_{j}} c^{i}_{jk} q^{k}.
\end{equation*}
As a result,
\begin{align*}
  \PB{F}{G}
  &= \pd{F}{q^{i}} \pd{G}{p_{i}} - \pd{G}{q^{i}} \pd{F}{p_{i}} \\
  &= q^{l} p_{m} \parentheses{ c^{i}_{kl} c^{m}_{ji} - c^{i}_{jl} c^{m}_{ki} } \pd{f}{\mu_{j}} \pd{g}{\mu_{k}} \\
  &= q^{l} p_{m} \parentheses{ - c^{i}_{kl} c^{m}_{ij} - c^{i}_{lj} c^{m}_{ik} } \pd{f}{\mu_{j}} \pd{g}{\mu_{k}} \\
  &= q^{l} p_{m} c^{i}_{jk} c^{m}_{il} \pd{f}{\mu_{j}} \pd{g}{\mu_{k}} \\
  &= \mu_{i} c^{i}_{jk} \pd{f}{\mu_{j}} \pd{g}{\mu_{k}} \\
  &= \PB{f}{g}_{+}(\mu),
\end{align*}
where the fourth equality follows from the Jacobi identity for the structure constants.

Therefore, given any Lie--Poisson bracket in terms of $\mu$, one can obtain an canonized canonical bracket via \eqref{eq:mu-qp}.
The Hamiltonian of the canonized system will have the form $H(q,p) = h(\mu)$ with $\mu$ given as in \eqref{eq:mu-qp}.
If the resulting equations of the canonical system are solved for $t \mapsto (q(t),p(t))$, then $t \mapsto \mu(t)$ constructed according to \eqref{eq:mu-qp} solves the Lie--Poisson equation~\eqref{eq:LP+}.

\section{Geometry of Clebsch Canonization}
\label{sec:geometry}
This section gives a geometric interpretation of the canonization presented in \Cref{ssec:CcanonizationC}.
Particularly, we show that the map~\eqref{eq:mu-qp} is the momentum map associated with a natural $\mathfrak{g}$-action on the cotangent bundle $T^{*}\mathfrak{g}$.

\subsection{Left $\mathfrak{g}$-action on $T^{*}\mathfrak{g}$}
Let $T^{*}\mathfrak{g} = \mathfrak{g} \times \mathfrak{g}^{*}$ be the cotangent bundle of $\mathfrak{g}$ and define
\begin{equation*}
  \mathfrak{g} \times T^{*}\mathfrak{g} \to T^{*}\mathfrak{g};
  \qquad
  (\xi,(q,p)) \mapsto (\ad_{\xi}q, -\ad_{\xi}^{*}p) \eqdef \xi_{T^{*}\mathfrak{g}}(q,p).
\end{equation*}
In coordinates, we can write it as follows:
\begin{align*}
  \xi_{T^{*}\mathfrak{g}}(q,p)
  &= \xi^{i} c_{ij}^{k} q^{j} \pd{}{q^{k}} - \xi^{i} c_{ij}^{k} p_{k} \pd{}{p_{j}} \\
  &= \xi^{i} c_{ij}^{k} \parentheses{ q^{j} \pd{}{q^{k}} - p_{k} \pd{}{p_{j}} }.
\end{align*}

Let us show that it is a left Lie algebra action, i.e., for any $\xi, \eta \in \mathfrak{g}$,
\begin{equation*}
  [\xi,\eta]_{T^{*}\mathfrak{g}} = -[\xi_{T^{*}\mathfrak{g}}, \eta_{T^{*}\mathfrak{g}}],
\end{equation*}
where the bracket on the left-hand side is the commutator in $\mathfrak{g}$ whereas the one on the right is the Jacobi--Lie bracket of vector fields on $T^{*}\mathfrak{g}$.
In fact, in the coordinate representation with respect to the standard basis $\{\tpd{}{q^{i}}, \tpd{}{p_{i}}\}_{i=1}^{n}$, we have,
\begin{align*}
  D\eta_{T^{*}\mathfrak{g}} \cdot \xi_{T^{*}\mathfrak{g}}
  &= \parentheses{
    \pd{}{q} (\ad_{\eta}q) \cdot \ad_{\xi}q,\,
    -\pd{}{p} (\ad_{\eta}^{*}p) \cdot (-\ad_{\xi}^{*}p)
  } \\
  &= \parentheses{ \ad_{\eta} \circ \ad_{\xi} q,\, \ad_{\eta}^{*} \circ \ad_{\xi}^{*} p } \\
  &= \parentheses{ [\eta,[\xi,q]],\, \ad_{\eta}^{*} \circ \ad_{\xi}^{*} p },
\end{align*}
where the second line follows because $q \mapsto \ad_{\eta}q$ and $p \mapsto \ad_{\eta}^{*}p$ are linear.
Therefore, we obtain
\begin{align*}
  [\xi_{T^{*}\mathfrak{g}}, \eta_{T^{*}\mathfrak{g}}]
  &= D\eta_{T^{*}\mathfrak{g}} \cdot \xi_{T^{*}\mathfrak{g}} - D\xi_{T^{*}\mathfrak{g}} \cdot \eta_{T^{*}\mathfrak{g}} \\
  &= \parentheses{
    [\eta,[\xi,q]] - [\xi,[\eta,q]],\,
    \ad_{\eta}^{*} \circ \ad_{\xi}^{*} p - \ad_{\xi}^{*} \circ \ad_{\eta}^{*} p
  } \\
  &= \parentheses{ [q,[\xi,\eta]],\, \ad_{[\xi,\eta]}^{*} p } \\
  &= \parentheses{ -\ad_{[\xi,\eta]} q,\, \ad_{[\xi,\eta]}^{*} p } \\
  &= -[\xi,\eta]_{T^{*}\mathfrak{g}},
\end{align*}
where we used the Jacobi identity of the commutator on $\mathfrak{g}$ and the following dual version of it: for any $\xi, \eta \in \mathfrak{g}$,
\begin{equation}
  \label{eq:dual_Jacobi}
  \ad_{\eta}^{*} \circ \ad_{\xi}^{*} - \ad_{\xi}^{*} \circ \ad_{\eta}^{*} = \ad_{[\xi,\eta]}^{*}.
\end{equation}

If $\mathfrak{g}$ is the Lie algebra of a Lie group $\mathsf{G}$, then we may first consider the left $\mathsf{G}$-action on $T^{*}\mathfrak{g}$ as follows:
\begin{equation*}
  \Phi\colon \mathsf{G} \times T^{*}\mathfrak{g} \to T^{*}\mathfrak{g};
  \qquad
  (g,(q,p)) \mapsto (\Ad_{g}q, \Ad_{g^{-1}}^{*}p) \defeq \Phi_{g}(q,p).
\end{equation*}
Clearly this is the cotangent lift of the adjoint action of $\mathsf{G}$ on $\mathfrak{g}$.
Then its infinitesimal generator gives the above Lie algebra action:
\begin{equation*}
  \left.\od{}{s} \Phi_{\exp(s\xi)}(q,p) \right|_{s=0} = (\ad_{\xi}q, -\ad_{\xi}^{*}p) = \xi_{T^{*}\mathfrak{g}}(q,p).
\end{equation*}

\subsection{Momentum Map $\mathbf{M}^{+}$}
Let us find the momentum map associated with the above Lie algebra action.
For any $\xi \in \mathfrak{g}$, define $M_{\xi}\colon T^{*}\mathfrak{g} \to \R$ by setting
\begin{equation*}
  X_{M_{\xi}} = \xi_{T^{*}\mathfrak{g}},
\end{equation*}
where $X_{M_{\xi}}$ is the Hamiltonian vector field for $M_{\xi}$ with respect to the canonical symplectic form on $T^{*}\mathfrak{g}$, i.e.,
\begin{equation*}
  X_{M_{\xi}} = \pd{M_{\xi}}{p_{i}} \pd{}{q^{i}} - \pd{M_{\xi}}{q^{j}} \pd{}{p_{j}}.
\end{equation*}
It is a straightforward calculation to find
\begin{equation*}
  M_{\xi}(q,p) = \ip{ p }{ \ad_{\xi} q } = -\ip{ \ad_{q}^{*}p }{ \xi }.
\end{equation*}
The momentum map $\mathbf{M}^{+}\colon T^{*}\mathfrak{g} \to \mathfrak{g}^{*}$ is then defined so that
\begin{equation*}
  \ip{ \mathbf{M}^{+}(q,p) }{ \xi } = M_{\xi}(q,p),
\end{equation*}
which yields
\begin{equation}
  \label{eq:M_+}
  \mathbf{M}^{+}(q,p) = -\ad_{q}^{*}p.
\end{equation}
We can obtain a coordinate expression for $\mathbf{M}^{+}$ using the dual basis $\{ E_{*}^{i} \}_{i=1}^{n}$ for $\mathfrak{g}^{*}$ as follows:
\begin{equation}
  \label{eq:M_+-coordinates}
 \mathbf{M}^{+}(q,p) = -q^{j} c_{ji}^{k} p_{k}\,E_{*}^{i} = c_{ij}^{k} q^{j} p_{k}\,E_{*}^{i},
\end{equation}
which is nothing but \eqref{eq:mu-qp} obtained earlier.

The above momentum map is infinitesimally equivariant:
For any $\eta \in \mathfrak{g}$ and any $(q,p) \in T^{*}\mathfrak{g}$,
\begin{align*}
  T_{(q,p)}\mathbf{M}^{+} \cdot \eta_{T^{*}\mathfrak{g}}(q,p)
  &= -\ad_{[\eta,q]}^{*}p + \ad_{q}^{*} \ad_{\eta}^{*} p \\
  &= \ad_{\eta}^{*} \ad_{q}^{*} p \\
  &= -\ad_{\eta}^{*} \mathbf{M}^{+}(q,p),
\end{align*}
where we again used the dual version~\eqref{eq:dual_Jacobi} of the Jacobi identity.
The infinitesimal equivariance implies (see, e.g., \citet[Theorem 12.4.1]{MaRa1999}) that $\mathbf{M}^{+}$ is a Poisson map with respect to the canonical Poisson bracket \eqref{eq:canPB} on $T^{*}\mathfrak{g} \cong T^{*}\R^{n}$ and the $(+)$-Lie--Poisson bracket \eqref{eq:LPB+} on $\mathfrak{g}^{*}$, i.e., for any smooth $f, g \colon \mathfrak{g}^{*} \to \R$,
\begin{equation}
  \label{eq:Poisson+}
  \PB{ f \circ \mathbf{M}^{+} }{ g \circ \mathbf{M}^{+} } = \PB{f}{g}_{+} \circ \mathbf{M}^{+}.
\end{equation}

\subsection{Right Action and $(-)$-Lie--Poisson bracket}
\label{ssec:(-)LPB}
In order to find a Poisson map $\mathbf{M}^{-}$ with respect to the $(-)$-Lie--Poisson bracket
\begin{equation}
  \label{eq:LPB-}
  \PB{f}{g}_{-}(\mu) = -\ip{ \mu }{ \brackets{ Df(\mu), Dg(\mu) } }
  = -\mu_{k} c^{k}_{ij} \pd{f}{\mu_{i}} \pd{g}{\mu_{j}}
\end{equation}
on $\mathfrak{g}^{*}$, one starts instead with the following right Lie algebra action:
\begin{equation*}
  \mathfrak{g} \times T^{*}\mathfrak{g} \to T^{*}\mathfrak{g};
  \qquad
  (\xi,(q,p)) \mapsto (-\ad_{\xi}q, \ad_{\xi}^{*}p) \eqdef \xi_{T^{*}\mathfrak{g}}(q,p),
\end{equation*}
which satisfies $[\xi,\eta]_{T^{*}\mathfrak{g}} = [\xi_{T^{*}\mathfrak{g}}, \eta_{T^{*}\mathfrak{g}}]$.

If $\mathfrak{g}$ is the Lie algebra of a Lie group $\mathsf{G}$, then we may consider the following right $\mathsf{G}$-action on $T^{*}\mathfrak{g}$:
\begin{equation*}
  \Phi\colon \mathsf{G} \times T^{*}\mathfrak{g} \to T^{*}\mathfrak{g};
  \qquad
  (g,(q,p)) \mapsto (\Ad_{g^{-1}}q, \Ad_{g}^{*}p) \defeq \Phi_{g}(q,p).
\end{equation*}
Then we have
\begin{equation*}
  \left.\od{}{t} \Phi_{\exp(t\xi)}(q,p) \right|_{t=0} = (-\ad_{\xi}q, \ad_{\xi}^{*}p) = \xi_{T^{*}\mathfrak{g}}(q,p).
\end{equation*}

The associated momentum map is
\begin{equation}
  \label{eq:M_-}
  \mathbf{M}^{-}(q,p) = \ad_{q}^{*}p = -c_{ij}^{k} q^{j} p_{k}\,E_{*}^{i},
\end{equation}
and satisfies, for any $f, g \colon \mathfrak{g}^{*} \to \R$,
\begin{equation}
  \label{eq:Poisson-}
  \PB{ f \circ \mathbf{M}^{-} }{ g \circ \mathbf{M}^{-} } = \PB{f}{g}_{-} \circ \mathbf{M}^{-}.
\end{equation}

\subsection{Clebsch Canonization}
Summarizing the above arguments, we have the following special class of \textit{symplectic realization} or \textit{Clebsch variables} (see, e.g., \citet{MaWe1983}):
\begin{theorem}[Clebsch canonization of Lie--Poisson equations]
  \label{thm:main}
  Given a smooth function $h\colon \mathfrak{g}^{*} \to \R$, define $H \colon T^{*}\mathfrak{g} \cong T^{*}\R^{n} \to \R$ as
  \begin{equation*}
    H(q,p) \defeq h \circ \mathbf{M}^{\pm}(q,p) = h\parentheses{ \mp\ad_{q}^{*}p },
  \end{equation*}
  using $\mathbf{M}^{\pm}\colon T^{*}\mathfrak{g} \to \mathfrak{g}^{*}$ defined in \eqref{eq:M_+} or \eqref{eq:M_-}, respectively.
  Let $t \mapsto (q(t), p(t))$ be a solution to the canonical Hamiltonian system (referred to as the canonized system)
  \begin{equation}
    \label{eq:canonized}
    \dot{q} = \pd{H}{p},
    \qquad
    \dot{p} = -\pd{H}{q}
  \end{equation}
  on $T^{*}\mathfrak{g} \cong T^{*}\R^{n}$.
  Then $t \mapsto \mu(t) \defeq \mathbf{M}^{\pm}(q(t),p(t))$ gives a solution to the Lie--Poisson equation
  \begin{equation*}
    \dot{\mu} = \mp\ad_{Dh(\mu)}^{*}\mu,
  \end{equation*}
  defined in terms of the $(\pm)$-Lie--Poisson bracket, \eqref{eq:LPB+} or \eqref{eq:LPB-}, respectively.
\end{theorem}
\begin{proof}
   It easily follows from the property that $\mathbf{M}^{\pm}$ is Poisson (see \eqref{eq:Poisson+} and \eqref{eq:Poisson-}) with respect to the canonical Poisson bracket on $T^{*}\mathfrak{g}$ and the $(\pm)$-Lie--Poisson bracket on $\mathfrak{g}^{*}$, respectively.
\end{proof}

\subsection{Dual Pair}
Moreover, there exists a dual pair of momentum maps in the sense of \citet{We1983} associated with the above canonization.
In order to define the other momentum map, let us first write $\mathbf{M}^{+}$ from \eqref{eq:M_+-coordinates} (the case with $\mathbf{M}^{-}$ is virtually the same) using its components as follows:
Writing $z = (q,p)$ for short,
\begin{equation*}
  \mathbf{M}^{+}(z) = M_{i}(z)\,E_{*}^{i},
  \qquad
  M_{i}(z) \defeq q^{T} \mathcal{C}_{i} p,
\end{equation*}
where $\{ \mathcal{C}_{i} \}_{i=1}^{n}$ are the $n \times n$ matrices defined in terms of the structure constants as
\begin{equation*}
  (\mathcal{C}_{i})_{jk} \defeq c_{ij}^{k}.
\end{equation*}
However, we may also write $M_{i}(z)$ as a bilinear form on $T^{*}\mathfrak{g}$ as follows:
\begin{equation}
  \label{eq:M-components}
  M_{i}(z) = \frac{1}{2} z^{T} \mathcal{M}_{i} z,
\end{equation}
where $\{ \mathcal{M}_{i} \}_{i=1}^{n}$ are the $2n \times 2n$ symmetric matrices defined as
\begin{equation*}
  \mathcal{M}_{i} \defeq
  \begin{bmatrix}
    0 & \mathcal{C}_{i} \\
    \mathcal{C}_{i}^{T} & 0
  \end{bmatrix}.
\end{equation*}

Consider the symplectic algebra
\begin{equation*}
  \sp(2n,\R) \defeq \setdef{ \tilde{\xi} \in \R^{2n\times2n} }{ \tilde{\xi}^{T} \mathbb{J} + \mathbb{J} \tilde{\xi} = 0 }
  \quad\text{where}\quad
  \mathbb{J} \defeq
  \begin{bmatrix}
    0 & I_{n} \\
    -I_{n}& 0
  \end{bmatrix}.
\end{equation*}
It is well known that $\sp(2n,\R)$ can be identified with the set $\sym(2n,\R)$ of $2n \times 2n$ real symmetric matrices equipped with the Lie bracket
\begin{equation*}
  [\xi,\eta]_{\mathbb{J}} \defeq \xi \mathbb{J} \eta - \eta \mathbb{J} \xi
\end{equation*}
via the following map:
\begin{equation*}
  \sym(2n,\R) \to \sp(2n,\R);
  \qquad
  \xi \mapsto \tilde{\xi} \defeq \mathbb{J} \xi.
\end{equation*}
Now, let us define a subalgebra $\mathfrak{h}$ of $\sp(2n,\R)$ as follows:
\begin{equation}
  \label{eq:mathfrak_h}
  \begin{split}
    \mathfrak{h}
    &\defeq \setdef{ \tilde{\sigma} \in \sp(2n,\R) }{ \tilde{\sigma}^{T} \mathcal{M}_{i} + \mathcal{M}_{i} \tilde{\sigma} = 0\ \forall i \in \{1, \dots, n\} } \\
    &\cong \setdef{ \sigma \in \sym(2n,\R) }{ [\sigma, \mathcal{M}_{i}]_{\mathbb{J}} = 0\ \forall i \in \{1, \dots, n\} } \\
    &= \setdef{ \sigma \in \sym(2n,\R) }{ (\sigma \mathbb{J} \mathcal{M}_{i})^{T} = -\sigma \mathbb{J} \mathcal{M}_{i}\ \forall i \in \{1, \dots, n\} }.
  \end{split}
\end{equation}
More concretely, we may write
\begin{equation*}
  \sigma =
  \begin{bmatrix}
    \sigma_{11} & \sigma_{12} \\
    \sigma_{12}^{T} & \sigma_{22}
  \end{bmatrix}
  \in \sym(2n,\R),
\end{equation*}
and see the following characterization of $\mathfrak{h}$:
\begin{equation}
  \label{eq:mathfrak_h-characterization}
  \sigma \in \mathfrak{h}
  \iff
  \left\{
  \begin{array}{l}
    \mathcal{C}_{i} \sigma_{11} = -\sigma_{11} \mathcal{C}_{i}^{T}, \smallskip\\
    \mathcal{C}_{i} \sigma_{12} = \sigma_{12} \mathcal{C}_{i}, \smallskip\\
    \sigma_{22} \mathcal{C}_{i} = -\mathcal{C}_{i}^{T} \sigma_{22}
  \end{array}
  \right.
  \forall i\in\{1, \dots, n\}.
\end{equation}

One can show that $\mathfrak{h}$ is non-trivial for any Lie algebra $\mathfrak{g}$:
\begin{proposition}
  \label{prop:nontriviality_of_h}
  Let $\kappa$ be the $n\times n$ symmetric matrix defining the Killing form on $\mathfrak{g}$, i.e.,
  \begin{equation*}
    \kappa(x, y) \defeq \tr(\ad_{x} \circ \ad_{y}) = \kappa_{ij} x^{i} y^{j},
  \end{equation*}
  or, in terms of the structure constants, $\kappa_{ij} \defeq c_{ik}^{l} c_{jl}^{k}$.
  Then, the following elements of $\sym(2n,\R)$ are contained in $\mathfrak{h}$:
  \begin{equation*}
    \sigma_{0} \defeq
    \begin{bmatrix}
      0 & I_{n} \\
      I_{n} & 0
    \end{bmatrix},
    \qquad
    \varkappa \defeq
    \begin{bmatrix}
      \kappa & 0 \\
      0 & 0
    \end{bmatrix}.
  \end{equation*}
  Furthermore, if $\mathfrak{g}$ is semisimple, then
  \begin{equation*}
    \varkappa^{*} \defeq
    \begin{bmatrix}
      0 & 0 \\
      0 & \kappa^{-1}
    \end{bmatrix}
  \end{equation*}
  is also contained in $\mathfrak{h}$ as well.
\end{proposition}
\begin{proof}
  See \Cref{sec:proof-nontriviality_of_h}.
\end{proof}

Using the above subalgebra $\mathfrak{h}$, we can construct the following dual pair in the sense of \citet{We1983} with some additional assumptions:
\begin{theorem}[Dual pair associated with Clebsch canonization]
  \label{thm:dual_pair}
  Let $\mathfrak{h}$ be the subalgebra of $\sp(2n,\R) \cong \sym(2n,\R)$ defined in \eqref{eq:mathfrak_h}, and consider the $\mathfrak{h}$-action on $T^{*}\mathfrak{g}$ defined by
  \begin{equation*}
    \mathfrak{h} \to \mathfrak{X}(T^{*}\mathfrak{g});
    \qquad
    \sigma \mapsto \sigma_{T^{*}\mathfrak{g}}(z) \defeq \tilde{\sigma} z = \mathbb{J} \sigma z,
  \end{equation*}
  and let $\mathbf{J}\colon T^{*}\mathfrak{g} \to \mathfrak{h}^{*}$ be its associated momentum map.
  Then:
  \begin{enumerate}[(i)]
  \item $J_{\sigma}(z) \defeq \ip{ \mathbf{J}(z) }{ \sigma } = \frac{1}{2}z^{T} \sigma z$ for any $z \in T^{*}\mathfrak{g}$ and any $\sigma \in \mathfrak{h}$.
    \smallskip
  \item There exists an open subset $U \subset T^{*}\mathfrak{g}$ such that $\mathbf{M}^{+}$ (or $\mathbf{M}^{-}$) and $\mathbf{J}$ are both submersions.
    \smallskip
  \item If $U$ is non-empty and $\dim\mathfrak{h} = \dim\mathfrak{g}$, then
    \begin{equation*}
      \begin{tikzcd}
        \mathfrak{h}^{*} & U \arrow[swap]{l}{\mathbf{J}} \arrow{r}{\mathbf{M}^{+}} & \mathfrak{g}^{*}
      \end{tikzcd}
    \end{equation*}
    is a dual pair with respect to the standard symplectic form $\Omega$ on $T^{*}\mathfrak{g}$ (restricted to $U$), i.e.,
    \begin{equation*}
      \parentheses{ \ker T_{z}\mathbf{M}^{+} }^{\Omega} = \ker T_{z}\mathbf{J}
      \quad
      \forall z \in U,
    \end{equation*}
    and similarly with $\mathbf{M}^{-}$ in place of $\mathbf{M}^{+}$, where $(\,\cdot\,)^{\Omega}$ stands for the symplectically orthogonal complement.
  \end{enumerate}
\end{theorem}
\begin{proof}
  See \Cref{sec:proof-dual_pair}.
\end{proof}

\section{Properties of Canonization}
\label{sec:properties}
\subsection{Subalgebra $\mathfrak{h}$ and Momentum Map $\mathbf{J}$}
The dual pair constructed in \Cref{thm:dual_pair} implies that the momentum map $\mathbf{J}$ is an invariant of the canonized system~\eqref{eq:canonized}.
For example, for $\sigma_{0}, \varkappa, \varkappa^{*} \in \mathfrak{h}$ from \Cref{prop:nontriviality_of_h}, the corresponding invariants are, writing $z = (q,p)$ for short,
\begin{equation}
  \label{eq:J_0}
  J_{0}(z) \defeq J_{\sigma_{0}}(z) = p \cdot q,
\end{equation}
and
\begin{equation}
  \label{eq:Killing_inv}
  J_{\varkappa}(z) = q^{T}\kappa q = \kappa(q,q),
  \qquad
  J_{\varkappa^{*}}(z) = p^{T} \kappa^{-1} p = \kappa^{-1}(p,p),
\end{equation}
where we abused the notation by using $\kappa$ and $\kappa^{-1}$ for both the bilinear forms and the associated matrices.

Note that, depending on the Lie algebra $\mathfrak{g}$, the subalgebra $\mathfrak{h} \subset \sym(2n,\R)$ may be larger than $\Span\{ \kappa_{0}, \varkappa \}$ or $\Span\{ \kappa_{0}, \varkappa, \varkappa^{*} \}$, and so there may be more invariants, as we shall see in the example presented in \Cref{ssec:htmb}, where $\dim\mathfrak{h} = 9$.

\subsection{Casimirs and Momentum Maps}
If the Lie--Poisson bracket possesses a Casimir, then there must be a corresponding invariant for the canonical Hamiltonian system~\eqref{eq:canonized}.
We would like to show that the invariant is indeed a Noether invariant (momentum map) of the canonized system:
\begin{proposition}
  \label{prop:Casimir-momentum_map}
  Suppose that $f\colon \mathfrak{g}^{*} \to \R$ is a Casimir of the Lie--Poisson bracket~\eqref{eq:LPB+} or \eqref{eq:LPB-}, and define $F\colon T^{*}\mathfrak{g} \to \R$ by setting $F \defeq f \circ \mathbf{M}^{\pm}$, i.e.,
  \begin{equation*}
    F(q,p) \defeq f\parentheses{ \mp\ad_{q}^{*}p }.
  \end{equation*}
  Let us also define
  \begin{equation*}
    \gamma \colon T^{*}\mathfrak{g} \to \mathfrak{g};
    \qquad
    \gamma(q,p)  \defeq Df(\mathbf{M}^{\pm}(q,p)) = Df\parentheses{ \mp\ad_{q}^{*}p },
  \end{equation*}
  and consider the following $\R$ (Lie algebra) action
  \begin{equation*}
    \R \times T^{*}\mathfrak{g} \to \mathfrak{X}(T^{*}\mathfrak{g});
    \qquad
    (s, (q,p)) \mapsto (\ad_{s\gamma(q,p)}q, -\ad_{s\gamma(q,p)}^{*}p) \defeq s_{T^{*}\mathfrak{g}}(q,p),
  \end{equation*}
  where $\mathfrak{X}(T^{*}\mathfrak{g})$ stands for the space of vector fields on $T^{*}\mathfrak{g}$.
  Then the momentum map corresponding to the action is $F$.
  Furthermore, the Hamiltonian $H$ is infinitesimally invariant under the action, and thus $F$ is an invariant of the canonized system~\eqref{eq:canonized}.
\end{proposition}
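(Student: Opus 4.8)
\emph{Proof proposal.} The plan is to reduce everything to two ingredients already available: the chain-rule computation of the Hamiltonian vector field carried out in Section~\ref{ssec:Killing}, and the Poisson property of $\mathbf{M}_{\pm}$, namely \eqref{eq:Poisson+} and \eqref{eq:Poisson-}. I will treat the $(+)$-case explicitly; the $(-)$-case is identical after the obvious sign changes. The first task is to recognize that the infinitesimal generator of the displayed $\R$-action is nothing but the Hamiltonian vector field $X_{F}$ of $F$. Indeed, passing from $h$ to $f$ in the derivation of \eqref{eq:antireduced2}---equivalently, replacing $\eta(q,p) = Dh\parentheses{ -\ad_{q}^{*}p }$ by $\gamma(q,p) = Df\parentheses{ -\ad_{q}^{*}p }$---is a purely formal substitution, since the evaluation of $\pd{}{q^{i}}\parentheses{ f\parentheses{ -\ad_{q}^{*}p } }$ and $\pd{}{p_{i}}\parentheses{ f\parentheses{ -\ad_{q}^{*}p } }$ is verbatim the one in Section~\ref{ssec:Killing}. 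This yields
\begin{equation*}
  X_{F}(q,p) = \parentheses{ \ad_{\gamma(q,p)}q,\, -\ad_{\gamma(q,p)}^{*}p },
\end{equation*}
so that $s_{T^{*}\mathfrak{g}} = s\,X_{F}$ for every $s \in \R$. Since $T^{*}\mathfrak{g}$ carries its canonical symplectic form and the generator of the action at $1 \in \R$ is $X_{F}$, the momentum map of the $\R$-action---the function whose Hamiltonian vector field is this generator---is precisely $F$, which is the first assertion.

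Next I would establish $\PB{H}{F} = 0$, from which the remaining claims follow immediately. Using $H = h \circ \mathbf{M}_{\pm}$, $F = f \circ \mathbf{M}_{\pm}$, and the Poisson property \eqref{eq:Poisson+}/\eqref{eq:Poisson-} of $\mathbf{M}_{\pm}$,
\begin{equation*}
  \PB{H}{F} = \PB{ h \circ \mathbf{M}_{\pm} }{ f \circ \mathbf{M}_{\pm} } = \PB{h}{f}_{\pm} \circ \mathbf{M}_{\pm},
\end{equation*}
and the right-hand side vanishes identically because $f$ is a Casimir of the $(\pm)$-Lie--Poisson bracket, i.e., $\PB{h}{f}_{\pm} \equiv 0$. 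Consequently $X_{F}[H] = 0$, which is exactly the infinitesimal invariance of $H$ under the action, and along any solution $t \mapsto (q(t),p(t))$ of the anti-reduced system~\eqref{eq:antireduced} one has $\od{}{t} F(q(t),p(t)) = X_{H}[F] = 0$, so $F$ is an invariant. (Equivalently, this last step is Noether's theorem applied to the $\R$-action together with its momentum map $F$.)

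The main obstacle is organizational rather than substantive: everything hinges on the single observation that the Poisson property of $\mathbf{M}_{\pm}$ turns $\PB{H}{F}$ into the pullback of $\PB{h}{f}_{\pm}$, which is zero for Casimir $f$. The most computation-heavy step, verifying the form of $X_{F}$, need not be redone from scratch---it is literally the calculation already displayed in Section~\ref{ssec:Killing} with $h$ replaced by $f$. One expository point to flag is that the ``$\R$-action'' must be read infinitesimally: because $\gamma$ depends on the base point $(q,p)$, what is genuinely given is a single vector field $X_{F}$ on $T^{*}\mathfrak{g}$ (and its local flow), and ``momentum map'' should be understood accordingly; none of the three conclusions depend on completeness of $X_{F}$.
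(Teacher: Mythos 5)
Your proposal is correct, and its first half (identifying $s_{T^{*}\mathfrak{g}} = sX_{F}$ by repeating the Section~\ref{ssec:Killing} computation with $h$ replaced by $f$, hence $F$ is the momentum map of the $\R$-action) is exactly the paper's argument. Where you diverge is the invariance part: the paper does \emph{not} pass through $\PB{H}{F}=0$; instead it proves the stronger statement that $\mathbf{M}_{\pm}$ itself is infinitesimally invariant under the action, by using the Casimir characterization $\ad_{Df(\mu)}^{*}\mu = 0$ (so that $\ad_{\gamma(q,p)}^{*}\ad_{q}^{*}p = 0$) together with the dual Jacobi identity~\eqref{eq:dual_Jacobi} to show $s_{T^{*}\mathfrak{g}}[\mathbf{M}_{\pm}] = 0$; invariance of $H = h\circ\mathbf{M}_{\pm}$ then follows for \emph{any} $h$, and invariance of $F$ is concluded at the end by either the Poisson property or Noether. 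Your route instead leans entirely on the already-established Poisson-map identities \eqref{eq:Poisson+}/\eqref{eq:Poisson-}: $\PB{H}{F} = \PB{h}{f}_{\pm}\circ\mathbf{M}_{\pm} \equiv 0$ since $f$ is a Casimir, which simultaneously gives $X_{F}[H]=0$ (infinitesimal invariance of $H$, since the generators are $sX_{F}$) and $X_{H}[F]=0$ (conservation of $F$). This is shorter and avoids the coordinate computation with the dual Jacobi identity, and it fully proves the stated proposition; what it does not deliver is the paper's intermediate geometric fact that the Casimir flow preserves the momentum map $\mathbf{M}_{\pm}$ itself (i.e., every collective observable $g\circ\mathbf{M}_{\pm}$, not just $H$, is invariant under the action), which is the extra content of the paper's calculation. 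Your closing caveat about reading the ``$\R$-action'' infinitesimally, since $\gamma$ depends on $(q,p)$, is a fair expository point and consistent with how the paper uses it; like the paper, you treat the $(\pm)$ sign bookkeeping only in the $(+)$ case, which is acceptable.
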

\begin{proof}
  Notice first that
  \begin{equation*}
    X_{F}(q,p) = \parentheses{ \pd{F}{p},\, -\pd{F}{q} }
    = \pm\parentheses{ \ad_{\gamma(q,p)}q,\, -\ad_{\gamma(q,p)}^{*}p },
  \end{equation*}
  and so, for any $s \in \R$,
  \begin{equation*}
    X_{s F} = s_{T^{*}\mathfrak{g}}.
  \end{equation*}
  This shows that $F$ is the momentum map corresponding to the above symmetry.

  Let us show that $\mathbf{M}^{\pm}$ is infinitesimally invariant under the above $\R$-action:
  First note that, since $f$ is a Casimir, its derivative $Df$ satisfies $\ad_{Df(\mu)}^{*}\mu = 0$ for any $\mu \in \mathfrak{g}^{*}$; see, e.g., \citet[Corollary~14.4.3]{MaRa1999}.
  Therefore, setting $\mu = -\ad_{q}^{*}p$ in particular, we have
  \begin{equation*}
    -\ad_{\gamma(q,p)}^{*} \ad_{q}^{*} p
    = \ad_{Df\parentheses{ -\ad_{q}^{*}p }}^{*} (-\ad_{q}^{*} p) = 0
  \end{equation*}
  for any $(q,p) \in T^{*}\mathfrak{g}$.
  Then, for any $s \in \R$, the directional derivative of $\mathbf{M}^{\pm}$ along the vector field $s_{T^{*}\mathfrak{g}}$ yields
  \begin{align*}
    s_{T^{*}\mathfrak{g}}[\mathbf{M}^{\pm}](q,p)
    &= \mp s \parentheses{ \ad_{[\gamma(q,p), q]}^{*}p - \ad_{q}^{*}\ad_{\gamma(q,p)}^{*}p } \\
    &= \mp s \parentheses{
      \ad_{\gamma(q,p)}^{*} \ad_{q}^{*} p
      - \ad_{q}^{*} \ad_{\gamma(q,p)}^{*} p
      + \ad_{q}^{*}\ad_{\gamma(q,p)}^{*} p
      } \\
    &= \mp s\parentheses{ \ad_{\gamma(q,p)}^{*} \ad_{q}^{*} p }\\
    &= 0,
  \end{align*}
  where we used the dual version~\eqref{eq:dual_Jacobi} of the Jacobi identity in the second equality.
  
  This implies that the Hamiltonian $H \defeq h \circ \mathbf{M}^{\pm}$ is infinitesimally invariant under the $\R$-action as well.
  That $F$ is an invariant of \eqref{eq:canonized} follows easily from either that $\mathbf{M}^{\pm}$ is Poisson or Noether's Theorem (see, e.g., \citet[Theorem~11.4.1]{MaRa1999}).
\end{proof}

\section{Collective Integrators via Clebsch Canonization}
\label{sec:integrators}
\subsection{Collective Lie--Poisson Integrators via Clebsch Canonization}
Let $\Psi_{\Delta t}\colon T^{*}\mathfrak{g} \to T^{*}\mathfrak{g}$ be an integrator with time step $\Delta t$ for the canonized system~\eqref{eq:canonized}.
In order for the resulting Lie--Poisson integrator to be \textit{collective} in the sense of \citet{McMoVe2014,McMoVe2015}, the method $\Psi_{\Delta t}$ must ``descend'' to a Lie--Poisson integrator $\psi_{\Delta t}$ on $\mathfrak{g}^{*}$ such that $\psi_{\Delta t} \circ \mathbf{M}^{\pm} = \mathbf{M}^{\pm} \circ \Psi_{\Delta t}$ and also that preserves coadjoint orbits in $\mathfrak{g}^{*}$ (and hence its Casimirs) exactly.

According to \citet[Theorem~7]{McMoVe2014}, one of the possible realizations of collective integrators is to have a dual pair of momentum maps $\mathbf{M}^{\pm}$ and $\mathbf{J}$ where $\mathbf{J}$ is quadratic, and use any symplectic Runge--Kutta method for $\Psi_{\Delta t}$.
Since \Cref{thm:dual_pair} gives the desired form of dual pair, the symplectic Runge--Kutta methods applied to our setting gives a collective integrators on $\mathbf{M}^{\pm}(U) \subset \mathfrak{g}^{*}$.

We use the Gauss--Legendre methods---a family of implicit Runge--Kutta methods based on the points of Gauss--Legendre quadrature---as the symplectic integrator $\Psi_{\Delta t}$ for the canonized system~\eqref{eq:canonized}.
The order of a Gauss--Legendre method is $2s$ if it is based on $s$ points \cite[Theorem 5.2]{HaNoWa1993}; the simplest is of order 2 and is the Implicit Midpoint Method.
In this paper, we will use the $4^{\rm th}$ order Gauss--Legendre method; see, e.g., \citet[Table~6.4 on p.~154]{LeRe2004}.

\subsection{Example~1: Kida Vortex}
The Kida vortex~\cite{Ki1981} is an elliptical vortex patch of constant vorticity in a two-dimensional flow.
The equations of motion obtained by \citeauthor{Ki1981} describe the time evolution of the semi-major axis $a$ and semi-minor axis $b$ and of the angle $\phi$ of orientation of the ellipse in a steady shear background flow:
\begin{equation*}
  \dot{a} = \frac{\epsilon}{2} a \sin(2\phi),
  \qquad
  \dot{b} = -\frac{\epsilon}{2} b \sin(2\phi),
  \qquad
  \dot{\phi} = \frac{a b}{(a + b)^{2}} + \frac{\omega}{2} + \frac{\epsilon}{2} \frac{a^{2} + b^{2}}{a^{2} - b^{2}} \cos(2\phi),
\end{equation*}
where $\epsilon > 0$ is the constant rate of strain of the background shear flow.
Defining the aspect ratio $\lambda \defeq b/a$, the equations reduce to
\begin{equation}
  \label{eq:Kida}
  \dot{\lambda} = -\epsilon\lambda \sin(2\phi),
  \qquad
  \dot{\phi} = \frac{\lambda}{(1 + \lambda)^{2}} + \frac{\omega}{2} + \frac{\epsilon}{2} \frac{1 + \lambda^{2}}{1 - \lambda^{2}} \cos(2\phi).
\end{equation}
It is then not difficult to see that the above system of equations is Hamiltonian~\cite{MeFlSe1989,MeZaSt1986}.

Furthermore, \citet{MeMoFl1997} showed that \eqref{eq:Kida} follows from a Lie--Poisson equation on $\so(2,1)^{*}$ obtained by projecting the Lie--Poisson structure for the 2D incompressible Euler equation onto quadratic moments of the vorticity.
Specifically, let $\so(2,1)$ be the Lie algebra of the Lie group
\begin{equation*}
  \SO(2,1) \defeq \setdef{ R \in \R^{3\times3} }{ R^{T} K R = K }
  \text{ with }
  K \defeq
  \begin{bmatrix}
    1 & 0 & 0 \\
    0 & 1 & 0 \\
    0 & 0 & -1
  \end{bmatrix}.
\end{equation*}
A basis for $\so(2,1)$ is given by
$\{ E_{1} = \begin{tbmatrix}
  0 & 0 & 0 \\
  0 & 0 & 1 \\
  0 & 1 & 0
\end{tbmatrix},
E_{2} = \begin{tbmatrix}
  0 & 0 & 1 \\
  0 & 0 & 0 \\
  1 & 0 & 0
\end{tbmatrix},
E_{3} = \begin{tbmatrix}
  0 & -1 & 0 \\
  1 & 0 & 0 \\
  0 & 0 & 0
\end{tbmatrix} \}$,
for which the structure constants $\{c^{k}_{ij}\}_{1\le i,j,k\le 3}$ satisfy, for any $\mu \in \so(2,1)^{*} \cong \R^{3}$,
\begin{equation*}
  \mu_{k} c^{k}_{ij} =
  \begin{bmatrix}
    0 & \mu_{3} & \mu_{2} \\
    -\mu_{3} & 0 & -\mu_{1} \\
    -\mu_{2} & \mu_{1} & 0
  \end{bmatrix}.
\end{equation*}
This is the (class A) type VIII Lie algebra of the Bianchi classification~\cite{ElMa1969,YoToMo2017}.
The Casimir of the corresponding Lie--Poisson bracket~\eqref{eq:LPB+} is then
\begin{equation}
  \label{eq:Casimir-Kida}
  f_{1}(\mu) \defeq \mu_{1}^{2} + \mu_{2}^{2} - \mu_{3}^{2},
\end{equation}
which is essentially the area of the ellipse.

The Killing form in this case is
\begin{equation*}
  \kappa(x,y) = 2(x_{1} y_{1} + x_{2} y_{2} - x_{3} y_{3}).
\end{equation*}
It is clearly non-degenerate, and thus there are two additional invariants (see \eqref{eq:Killing_inv}):
\begin{equation}
  \label{eq:Killing_inv-Kida}
  J_{1}(q,p) \defeq \frac{1}{2}\kappa(q,q) = q_{1}^{2} + q_{2}^{2} - q_{3}^{2},
  \qquad
  J_{2}(q,p) \defeq \frac{1}{2}\kappa^{-1}(p,p) = p_{1}^{2} + p_{2}^{2} - p_{3}^{2}.
\end{equation}
It is also easy to show that $\mathfrak{h} = \Span\{ \sigma_{0}, \varkappa, \varkappa^{*} \}$ using \eqref{eq:mathfrak_h-characterization} (see also \Cref{prop:nontriviality_of_h}); hence $J_{0}$ from \eqref{eq:J_0} along with these two invariants are the components of the momentum map $\mathbf{J}$.

The variables $(\mu_{1}, \mu_{2}, \mu_{3})$ are related to the original variables $(\lambda, \phi)$ as follows:
\begin{equation}
  \label{eq:mu-lambda_phi}
  \mu_{2} = \frac{\pi}{16}\parentheses{ \lambda - \frac{1}{\lambda} } \cos(2\phi),
  \qquad
  \mu_{3} = -\frac{\pi}{16}\parentheses{ \lambda + \frac{1}{\lambda} } \cos(2\phi),
  \qquad
  \mu_{1}^{2} + \mu_{2}^{2} - \mu_{3}^{2} = -\frac{\pi^{2}}{64}.
\end{equation}
With the Hamiltonian (the ``excess energy'' of the elliptical vortex patch~\cite{MeMoFl1997}) $h\colon \so(2,1)^{*} \cong \R^{3} \to \R$ defined as
\begin{equation}
  \label{eq:h-Kida}  
  h(\mu) \defeq \epsilon \mu_{2} + \omega \mu_{3} - \frac{\pi}{8} \ln\parentheses{ \frac{\pi}{8} - \mu_{3} },
\end{equation}
the Lie--Poisson equation $\dot{\mu} = -\ad_{Dh(\mu)}^{*}\mu$ from \eqref{eq:LP+} yields
\begin{equation}
  \label{eq:LP-Kida}
  \dot{\mu}_{1} = \omega \mu_{2} + \epsilon \mu_{3} + \frac{\pi \mu_{2}}{\pi - 8\mu_{3}},
  \qquad
  \dot{\mu}_{2} = -\mu_{1}\parentheses{ \omega +  \frac{\pi}{\pi - 8\mu_{3}} },
  \qquad
  \dot{\mu}_{3} = \epsilon \mu_{1}.
\end{equation}
One can then show that, using \eqref{eq:mu-lambda_phi}, the above Lie--Poisson equation gives rise to the original equation \eqref{eq:Kida} of \citeauthor{Ki1981}.

The map~\eqref{eq:M_+} yields (lowering the indices for $q$ for simplicity),
\begin{equation}
  \label{eq:M-Kida}
  \mathbf{M}^{+}(q,p) = (q_{2} p_{3} + q_{3} p_{2},\, -q_{3} p_{1} - q_{1} p_{3},\, -q_{1} p_{2} + q_{2} p_{1}).
\end{equation}
Following the proof (in \Cref{sec:proof-dual_pair}) of \Cref{thm:dual_pair}, we can show that there exists an open set $U$ that is dense in $T^{*}\mathfrak{g}$ on which $\mathbf{M}^{+}$ and $\mathbf{J}$ are submersions.
We also saw above that $\dim\mathfrak{h} = 3 = \dim\mathfrak{g}$.
Hence we have a dual pair as described in \Cref{thm:dual_pair}.

We then have the Hamiltonian
\begin{align*}
  H(q,p) &\defeq h(\mathbf{M}^{+}(q,p)) \\
  &= -\epsilon (q_{3} p_{1} + q_{1} p_{3}) - \omega (q_{1} p_{2} - q_{2} p_{1}) - \frac{\pi}{8} \ln\parentheses{ \frac{\pi}{8} + q_{1} p_{2} - q_{2} p_{1} }.
\end{align*}
The canonized system~\eqref{eq:canonized} is therefore
\begin{equation}
  \label{eq:canonized-Kida}
  \begin{array}{lll}
    \DS \dot{q}_{1} = \omega q_{2} - \epsilon q_{3} + \frac{\pi}{8} \frac{q_{2}}{q_{1} p_{2} - q_{2} p_{1} + \pi/8},
    & \DS \dot{q}_{2} = -\omega q_{1} - \frac{\pi}{8} \frac{q_{1}}{q_{1} p_{2} - q_{2} p_{1} + \pi/8},
    & \DS \dot{q}_{3} = -\epsilon q_{1},
    \bigskip\\
    \DS \dot{p}_{1} = \omega p_{2} + \epsilon p_{3} + \frac{\pi}{8} \frac{p_{2}}{q_{1} p_{2} - q_{2} p_{1} + \pi/8},
    & \DS \dot{p}_{2} = -\omega p_{1} - \frac{\pi}{8} \frac{p_{1}}{q_{1} p_{2} - q_{2} p_{1} + \pi/8},
    & \DS \dot{p}_{3} = \epsilon p_{1}.
  \end{array}
\end{equation}

\Cref{fig:mu-Kida} shows numerical results with parameters $\epsilon = 1/2$ and $\omega = -1$ with initial condition determined by $\mu_{1}(0) = 1$, $f_{1}(\mu(0)) = -1/4$ and $h(\mu(0)) = 1$; this is a case from \citet[Fig.~2]{MeMoFl1997}.
It shows the time evolution of the solution to \eqref{eq:LP-Kida} computed by the collective integrator as well as the trajectory of the solution in $\so(2,1)^{*}$ plotted with the level sets of the Hamiltonian $h$ and the Casimir $f_{1}$; see \eqref{eq:h-Kida} and \eqref{eq:Casimir-Kida}.
We used the $4^{\rm th}$ order Gauss--Legendre method to solve the canonized system~\eqref{eq:canonized-Kida} with the initial condition $(q(0),p(0))$ obtained by solving $\mathbf{M}^{+}(q(0),p(0)) = \mu(0)$; we additionally imposed $q(0) = (1,0,0)$ and $p_{1}(0) = 0$ to obtain the unique solution.

\begin{figure}[htbp]
  \centering
  \subfigure[Time evolution]{
    \includegraphics[width=.475\linewidth]{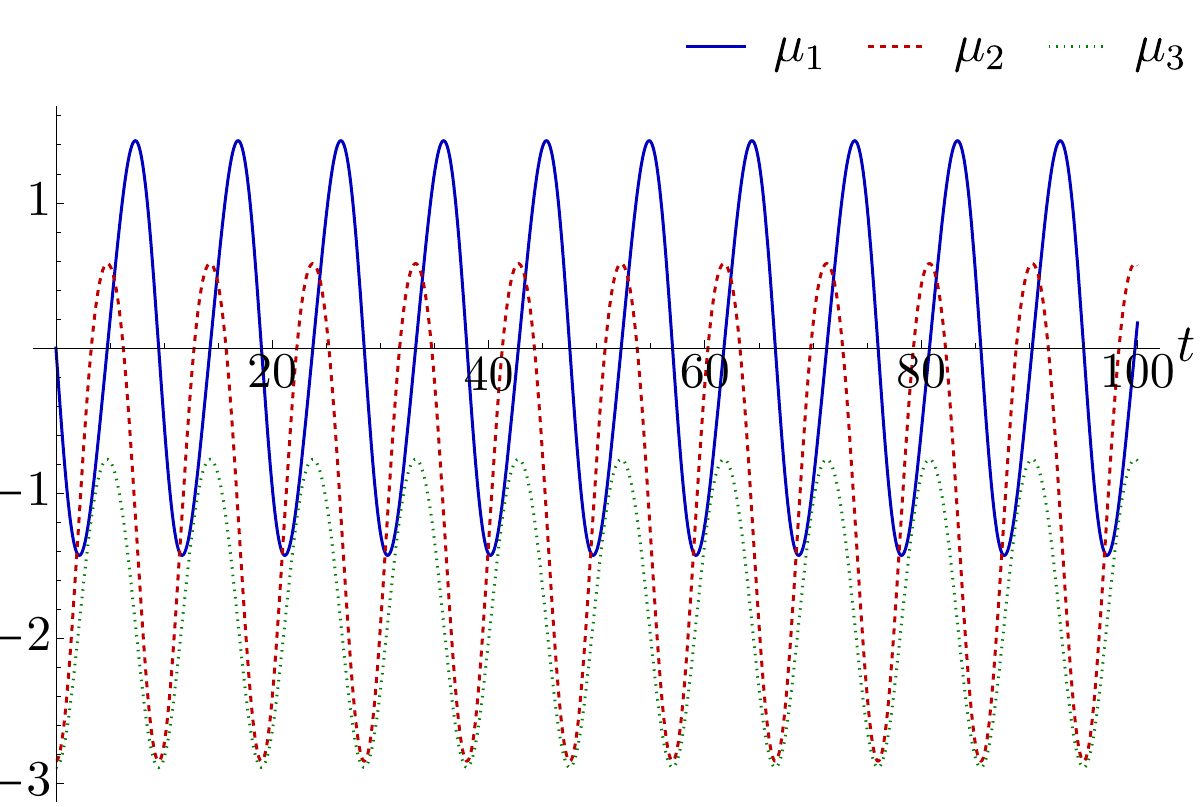}
  }
  \quad
  \subfigure[Lie--Poisson dynamics and invariants]{
    \includegraphics[width=.35\linewidth]{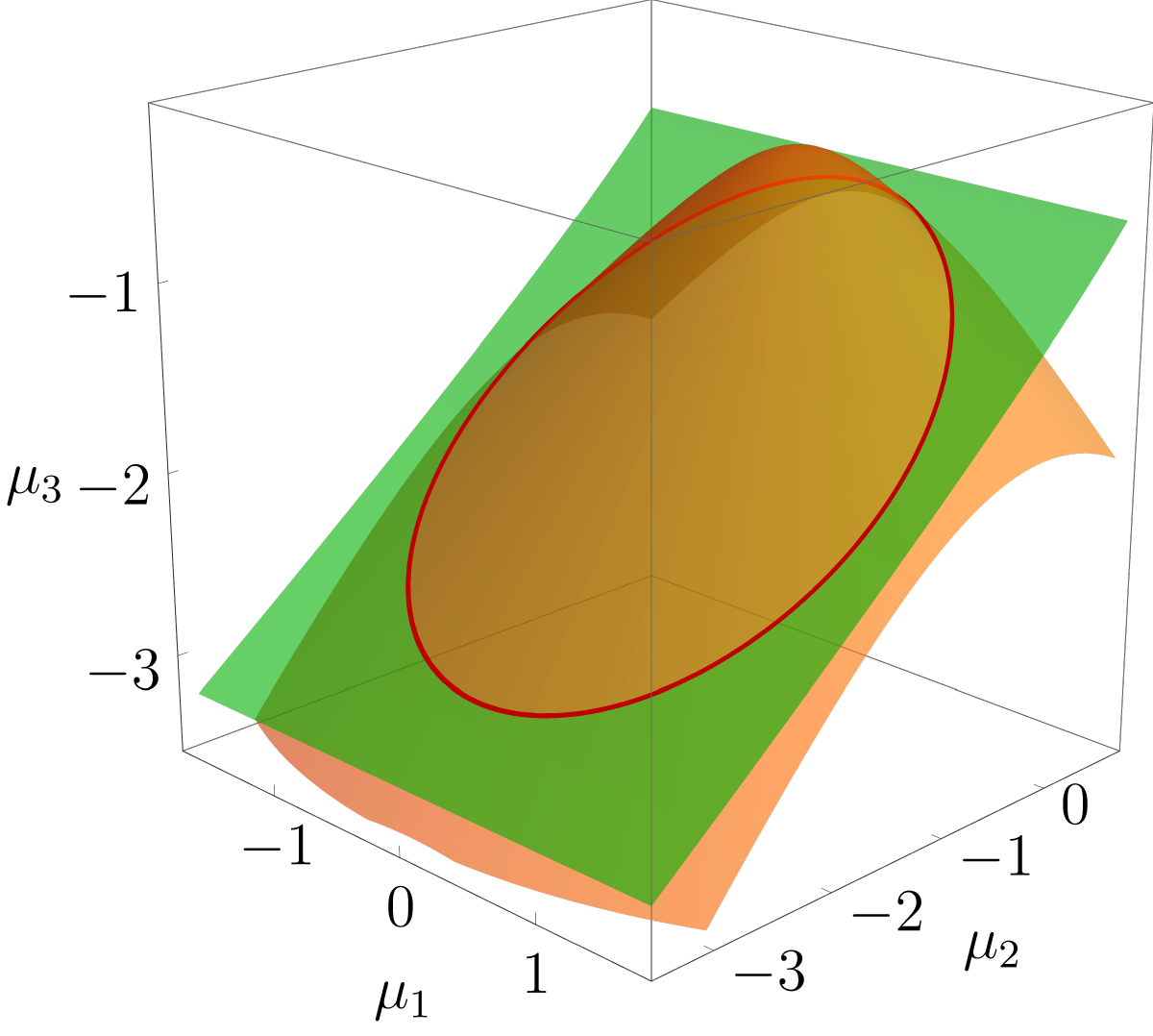}
  }
  \captionsetup{width=0.95\textwidth}
  \caption{
    (a)~Time evolution of $\mu$ computed using the canonized system~\eqref{eq:canonized-Kida}.
    The solutions are shown for the time interval $0 \le t \le 100$ with time step $\Delta t = 0.1$.
    (b)~The red curve is the Lie--Poisson dynamics of the Kida vortex in $\mathfrak{g}^{*} = \so(2,1)^{*} \cong \R^{3}$ computed using the canonized system~\eqref{eq:canonized-Kida} and mapped by $\mathbf{M}^{+}$ in \eqref{eq:M-Kida}.
    The green and orange surfaces are the level sets of the Hamiltonian $h$ and the Casimir $f_{1}$ from \eqref{eq:h-Kida} and \eqref{eq:Casimir-Kida}, respectively.
  }
  \label{fig:mu-Kida}
\end{figure}

For comparison, we also solved the Lie--Poisson equation~\eqref{eq:LP-Kida} directly using the $4^{\rm th}$ order explicit Runge--Kutta method.
\Cref{fig:RK4vsIRK4-Kida} compares the time evolutions of the relative errors in the Hamiltonian $h$ and the Casimir $f_{1}$ along these numerical solutions.
The explicit Runge--Kutta solution exhibits a drift that seems to be detrimental in the long run.
Notice also that it exhibits a more significant drift in the Casimir.
On the other hand, the solution of the collective integrator does not exhibit drifts in either the Hamiltonian or the Casimir; note that the latter is preserved exactly in theory.

\begin{figure}[htbp]
  \centering
  \subfigure[Hamiltonian $h$ from \eqref{eq:h-Kida}]{
    \includegraphics[width=.45\linewidth]{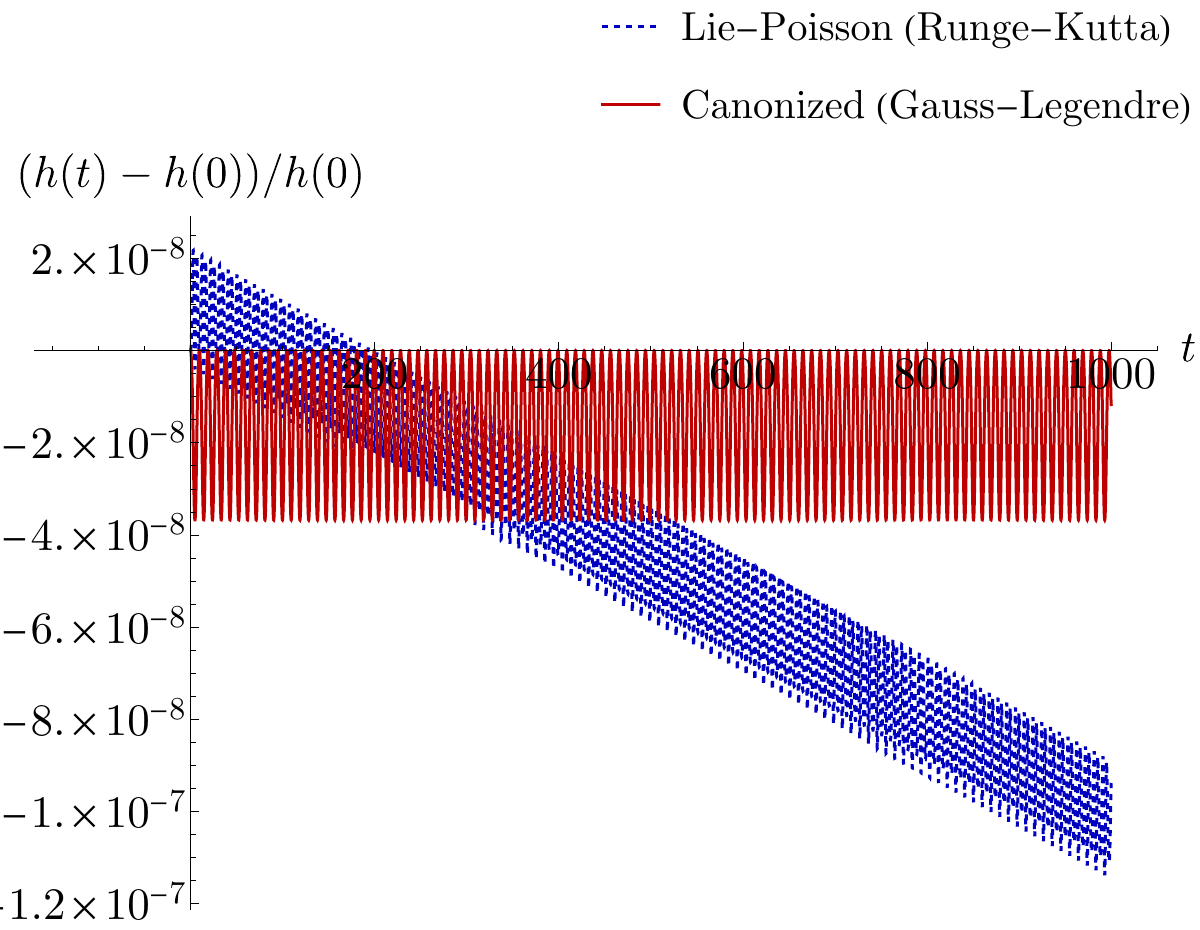}
  }
  \quad
  \subfigure[Casimir $f_{1}$ from \eqref{eq:Casimir-Kida}]{
    \includegraphics[width=.45\linewidth]{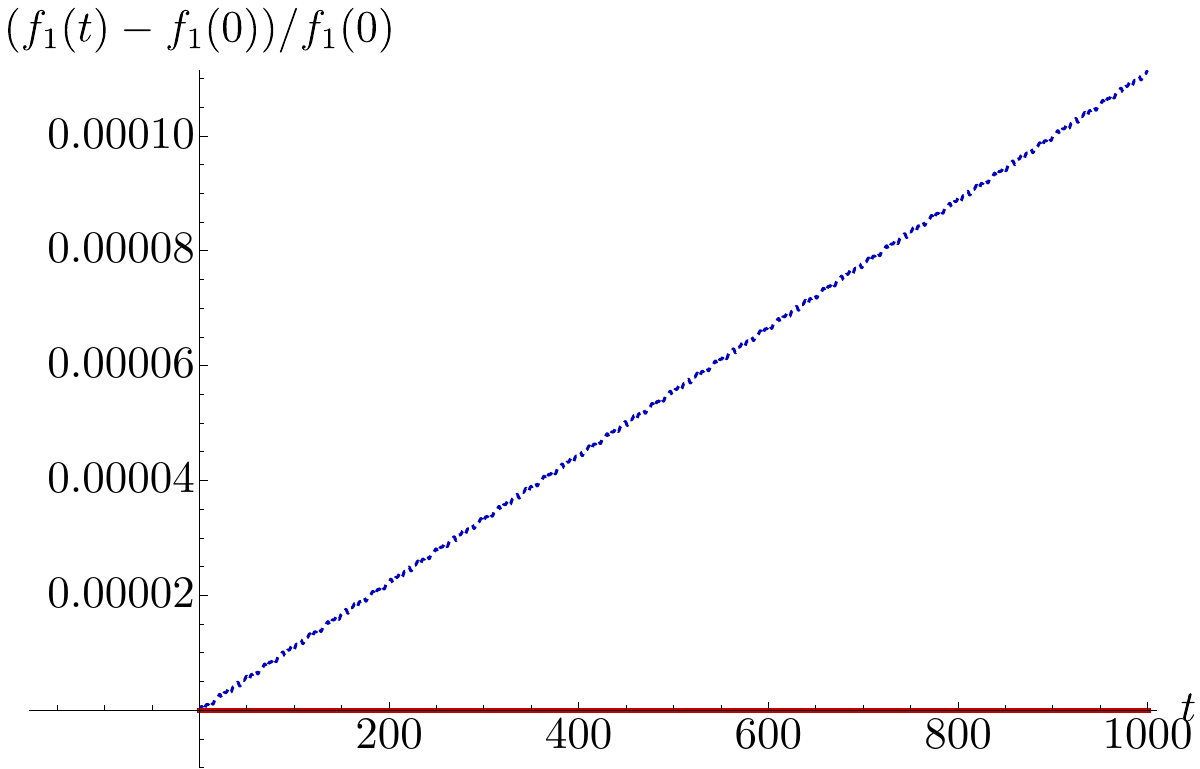}
  }
  \captionsetup{width=0.95\textwidth}
  \caption{
    Time evolutions of relative errors in Hamiltonian $h$ and Casimir $f_{1}$ from the Kida system.
    The dashed blue curve is the $4^{\rm th}$ order explicit Runge--Kutta method directly applied to Lie--Poisson equation~\eqref{eq:LP-Kida} whereas the solid red curve is the $4^{\rm th}$ order Gauss--Legendre method applied to the canonized system~\eqref{eq:canonized-Kida}.
    The solutions are shown for the time interval $0 \le t \le 1000$ with time step $\Delta t = 0.1$.
    Note that, in (b), the red line is made thicker to make it visible; the actual variation is so small that it is barely visible if plotted with the same thickness as the blue line or as in (a).
  }
  \label{fig:RK4vsIRK4-Kida}
\end{figure}

\Cref{fig:Killing_inv-Kida} shows how well the collective integrator preserves the components of the momentum map $\mathbf{J}$.
This is because the Gauss--Legendre methods preserve these invariants exactly in theory.
However, being an implicit method, it introduces an error in each step when solving nonlinear equations---the likely culprit of the small errors observed in the figures.

\begin{figure}[htbp]
  \centering
  \subfigure[Component $J_{0}$]{
    \includegraphics[width=.31\linewidth]{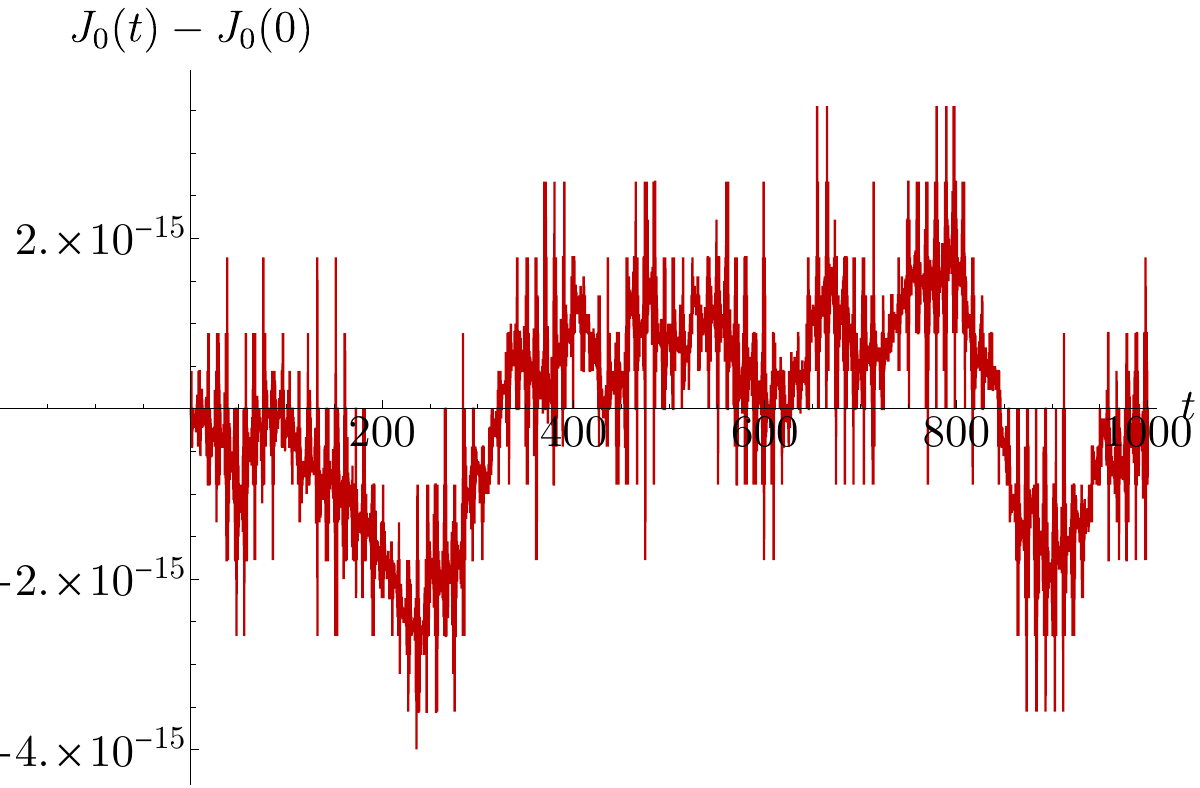}
  }
  \subfigure[Component $J_{1}$]{
    \includegraphics[width=.31\linewidth]{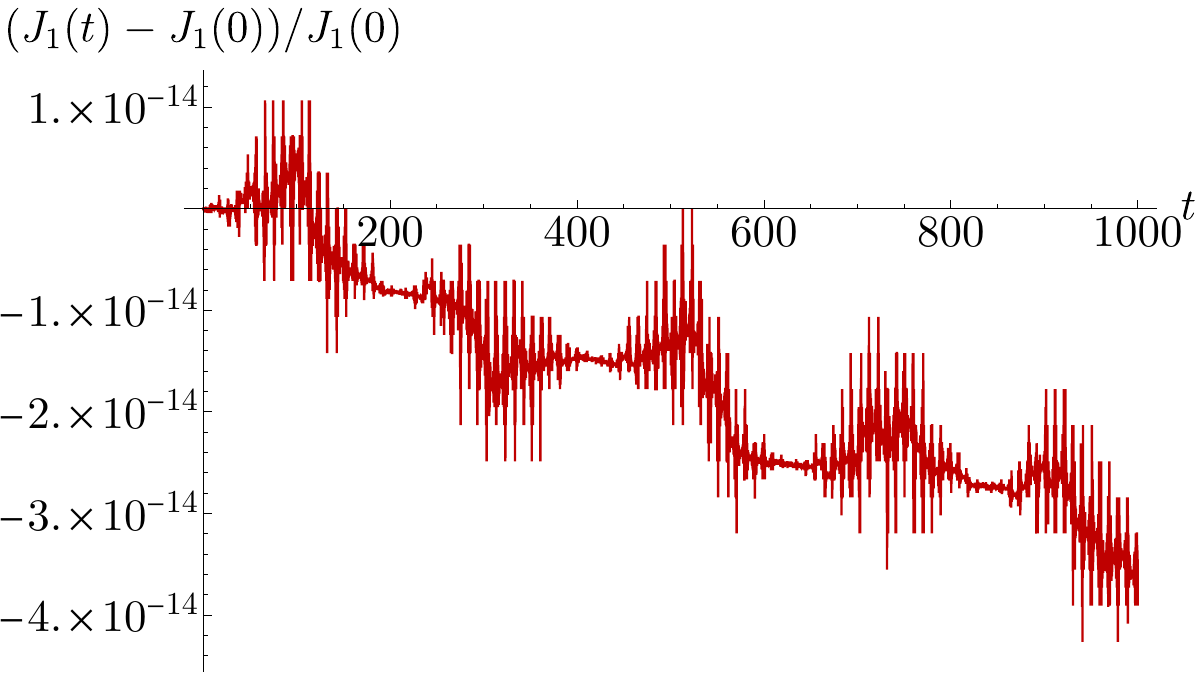}
  }
  \subfigure[Component $J_{2}$]{
    \includegraphics[width=.31\linewidth]{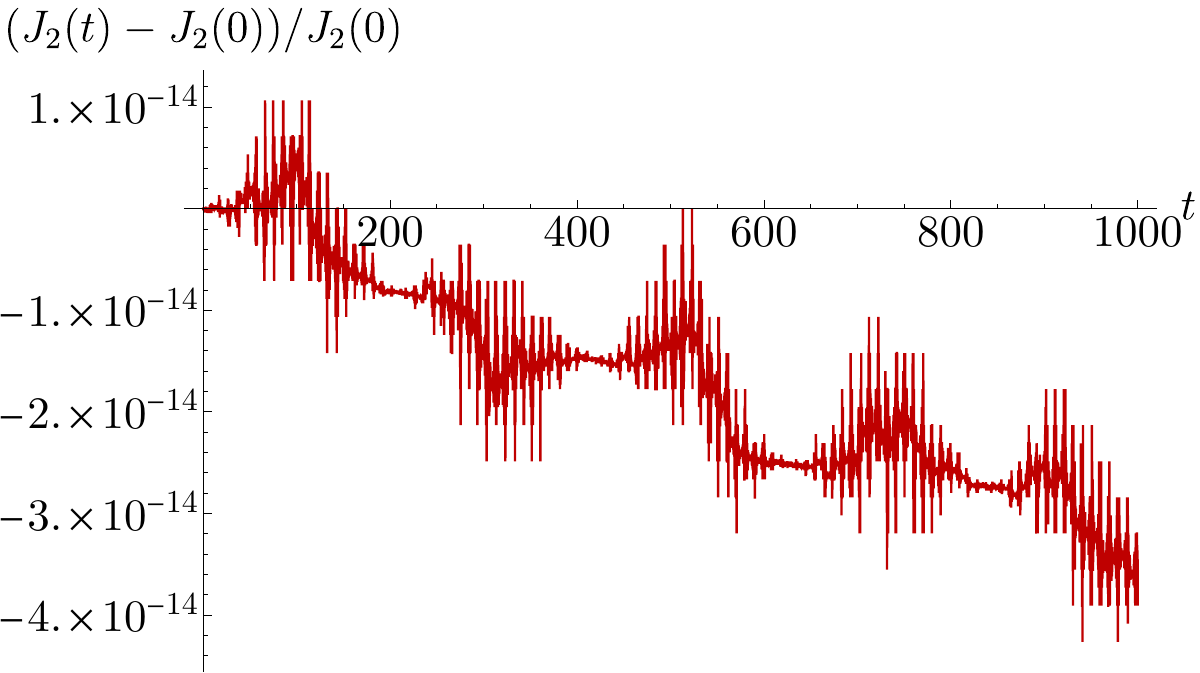}
  }
  \captionsetup{width=0.95\textwidth}
  \caption{
    Time evolutions of absolute or relative errors in components of momentum map $\mathbf{J}$ from  \eqref{eq:J_0} and  \eqref{eq:Killing_inv-Kida} computed by the $4^{\rm th}$ order Gauss--Legendre method applied to the canonized Kida system~\eqref{eq:canonized-Kida}.
    The solutions are shown for the time interval $0 \le t \le 1000$ with time step $\Delta t = 0.1$.
  }
  \label{fig:Killing_inv-Kida}
\end{figure}

\subsection{Example~2: Heavy Top on a Movable Base}
\label{ssec:htmb}
As a higher-dimensional and more practical example, consider the system shown in \Cref{fig:htmb-details} from \citet{CoOh-EPwithBSym1}: It is a heavy top with mass $m$ placed on a movable base---point mass $M$ for simplicity---under gravity $\mathrm{g}$.

As the base is free to move, the system is defined by the rotational motion of the heavy top and the linear motion of the base. Hence the natural configuration space is the matrix Lie group
\begin{equation*}
  \SE(3) = \setdef{ (R,\mathbf{x}) \defeq \begin{bmatrix} R & \mathbf{x} \\0 & 1 \end{bmatrix} }{ R \in \SO(3), \mathbf{x} \in  \R^{3}},
\end{equation*}
where $R \in \SO(3)$ gives the orientation of the top and $\mathbf{x}$ is the position of the base.
The left translation of the tangent vector $(\dot{R}, \dot{\mathbf{x}}) \in T_{(R,\mathbf{x})}\SE(3)$ to the identity yields
\begin{equation*}
  \begin{bmatrix}
    \hat{\mathbf{\Omega}} & \mathbf{v} \\
    0 & 0
  \end{bmatrix}
  \defeq
  \begin{bmatrix}
    R & \mathbf{x} \\
    0 & 1
  \end{bmatrix}^{-1}
  \begin{bmatrix}
    \dot{R} & \dot{\mathbf{x}} \\
    0 & 0
  \end{bmatrix}
  =
  \begin{bmatrix}
    R^{-1}\dot{R} & R^{-1}\dot{\mathbf{x}} \\
    0 & 0
  \end{bmatrix}
  \in \se(3),
\end{equation*}
which are the angular velocity of the top and the base velocity with respect to the body frame of the top.
Note that we identify $\se(3) = \so(3) \ltimes \R^{3}$ with $\R^{3} \times \R^{3}$ via the hat map $\hat{(\,\cdot\,)}\colon \R^{3} \to \so(3)$; see, e.g., \cite[Eq.~(9.2.7) on p.~289]{MaRa1999}.

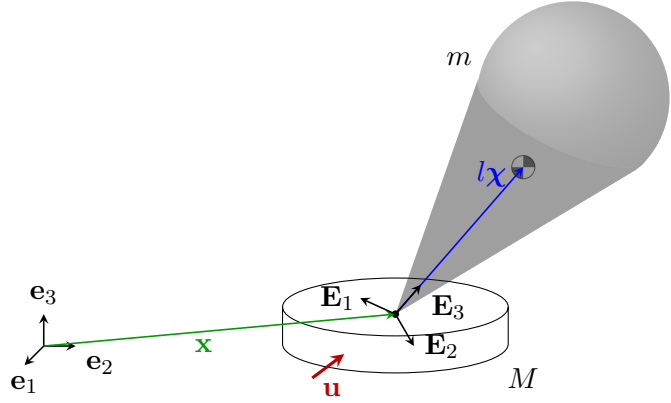
\begin{figure}[htbp]
  \centering
  \begin{tikzpicture}[scale=.85]
    \node [draw, cylinder, shape aspect=3, rotate=90, minimum height=1cm, minimum width=3cm] (c1)  at (0,1){};
    \draw[fill] (0,1.5) circle [radius=0.05];

    \fill[
    left color=gray!50!black,
    middle color=gray!50,
    right color=gray!50!black,
    shading=axis,
    opacity=0.25
    ] (3.75,3.75) -- (0,1.5) -- (1.3,5.2) [rotate=335] arc (180:340:1.49cm and 0.5cm); 
    \fill[
    shading=ball,
    ball color=gray!50,
    opacity=0.25
    ] (1.3,5.2) [rotate=335] arc (180:340:1.49cm and 0.5cm) arc (340:560:1.507cm); 

    \node at (1,5.5)    {$m$};
    \node at (2,0.5)    {$M$};

    \node[semitransparent] at (2,3.8) {\centerofmass};
    \draw[semithick,blue,->,>=stealth]  (0,1.5) -- (2,3.8);
    \node[left] at (1.9,3.7) {\color{blue}{$l\boldsymbol{\chi}$}};

    \draw[semithick,->,>=stealth]  (-5.5,1) -- (-5.5,1.5) node[above] {$\mathbf{e}_3$};
    \draw[semithick,->,>=stealth]  (-5.5,1) -- (-5.8,0.7) node[below] {$\mathbf{e}_1$};
    \draw[semithick,->,>=stealth]  (-5.5,1) -- (-5,1) node[below right] {$\mathbf{e}_2$};

    \draw[semithick,green!60!black][->,>=stealth]  (-5.5,1) -- (0,1.5);
    \node at (-3,1) {\color{green!60!black}{$\mathbf{x}$}};

    \draw[semithick,->,>=stealth]  (0,1.5) -- (-0.55,1.75) node[left] {$\mathbf{E}_1\!$};
    \draw[semithick,->,>=stealth]  (0,1.5) -- (0.29,1) node[right] {$\mathbf{E}_2$};
    \draw[semithick,->,>=stealth]  (0,1.5) -- (0.4,1.96) node[below right] {$\mathbf{E}_3$};

    \draw[very thick, ->,>=stealth, red!75!black] (-1.3,0.5) -- (-0.8,0.875);
    \node[below right, red!75!black] at (-1.3,0.575) {$\mathbf{u}$};

  \end{tikzpicture}
  \caption{ Heavy top on a movable base.}
  \label{fig:htmb-details}
\end{figure}

Let $\bar{m} \defeq m+M$ be the total mass of the system, $l$ the distance from the junction point of the top and the base to the center of mass of the heavy top, $\boldsymbol{\chi}$ the unit vector in that direction in the body frame, and $\mathbb{I}_{0} \defeq \diag(I_{1}, I_{2}, I_{3})$ the inertia mass matrix of the top with respect to the junction point  (we assume $I_{1} = I_{2}$); see \Cref{fig:htmb-details}.

Using the body angular momentum $\bPi$ and the linear impulse $\bP$ related to $\bOmega$ and $\mathbf{v}$ as
\begin{equation*}
  \mathbf{\Pi} = \mathbb{I}_{0} \mathbf{\Omega} + ml \boldsymbol{\chi} \times \mathbf{v},
  \qquad
  \mathbf{P} = - ml \boldsymbol{\chi} \times \mathbf{\Omega} + \bar{m} \mathbf{v},
\end{equation*}
the Hamiltonian of the system is
\begin{equation*}
  h(\bPi, \bP, \bGamma, x_{3}) \defeq
  \dfrac{1}{2} \parentheses{
    \mathbf{\Pi} \cdot (\mathbb{I}^{-1} \mathbf{\Pi}) + 2k m l \mathbf{\Pi} \cdot (\mathbf{P} \times \boldsymbol{\chi}) + \mathbf{P} \cdot (\mathbb{M}^{-1} \mathbf{P})
  } + m \mathrm{g} l \boldsymbol{\chi} \cdot \mathbf{\Gamma}
  + \bar{m} \mathrm{g} x_{3}
\end{equation*}
with
\begin{equation*}
  \mathbb{I} \defeq \diag\parentheses{ I_{1} - \frac{m^{2}l^{2}}{\bar{m}},\, I_{1} - \frac{m^{2}l^{2}}{\bar{m}},\, I_{3} },
  \qquad
  \mathbb{M} \defeq \diag\parentheses{\bar{m} - \frac{m^{2}l^{2}}{I_{1}},\, \bar{m} - \frac{m^{2}l^{2}}{I_{1}},\, \bar{m} }.
\end{equation*}
Then the equations of motion are written as the Lie--Poisson equation on $(\se(3) \ltimes \R^{4})^{*}$:
\begin{equation}
  \label{eq:LP-htmb}
  \begin{array}{c}
    \DS \dot{\bPi} = \bPi \times \pd{h}{\bPi} + \bP \times \pd{h}{\bP} + \bGamma \times \pd{h}{\bGamma},
    \qquad
    \DS \dot{\bP} = \bP \times \pd{h}{\bPi} - \pd{h}{x_{3}}\bGamma,
    \smallskip\\
    \DS
    \dot{\bGamma} = \bGamma \times \pd{h}{\bPi},
    \qquad
    \dot{x}_{3} = \bGamma \cdot \pd{h}{\bP}.
  \end{array}
\end{equation}
The main goal of \cite{CoOh-EPwithBSym1} is to stabilize the upright position of the heavy top by applying control $\mathbf{u}$ to the base, i.e., the second equation of \eqref{eq:LP-htmb} is replaced by
\begin{equation*}
  \dot{\bP} = \bP \times \pd{h}{\bPi} - \pd{h}{x_{3}}\bGamma + \mathbf{u},
\end{equation*}
Specifically, the control $\mathbf{u}$ was broken into two as $\mathbf{u} = \mathbf{u}^{\rm p} + \mathbf{u}^{\rm k}$, corresponding to the potential and kinetic shaping, with the potential part being $\mathbf{u}^{\rm p} = \pd{h}{x_{3}} \mathbf{\Gamma} = \bar{m} \mathrm{g} \mathbf{\Gamma}$, so that the Lie--Poisson equation~\eqref{eq:LP-htmb} now becomes
\begin{equation}
  \label{eq:CLP-htmb}
  \dot{\bPi} = \bPi \times \pd{h}{\bPi} + \bP \times \pd{h}{\bP} + \bGamma \times \pd{h}{\bGamma},
  \qquad
  \dot{\bP} = \bP \times \pd{h}{\bPi} + \mathbf{u}^{\rm k},
  \qquad
  \dot{\bGamma} = \bGamma \times \pd{h}{\bPi},
\end{equation}
where we dropped the equation for $x_{3}$ because it is now decoupled from the rest.
In \cite{CoOh-EPwithBSym1}, it is found, via the method of controlled Lagrangians~\cite{BlLeMa2000,BlLeMa2001}, applying the control
\begin{equation*}
  \mathbf{u}^{\rm k} = (\rho - \bar{m}) \parentheses{ \dot{\mathbf{v}} - \mathbf{v} \times \mathbf{\Omega} }
  \text{ where }
  \mathbf{v} \defeq \pd{h}{\bP}
\end{equation*}
with $\rho \in \R$ renders the system~\eqref{eq:CLP-htmb} the Lie--Poisson equation on $\parentheses{\se(3) \ltimes \R^{3}}^{*}$ with a new control Hamiltonian $h_{\rm c}\colon (\se(3) \ltimes \R^{3})^{*} \to \R$ given by
\begin{equation}
  \label{eq:h_c}
    h_{\rm c}(\mathbf{\Pi}, \mathbf{P}, \mathbf{\Gamma})
    = \dfrac{1}{2} \parentheses{
      \mathbf{\Pi} \cdot (\mathbb{I}_{\rm c}^{-1} \mathbf{\Pi})
      + 2k_{\rm c} m l \mathbf{\Pi} \cdot (\mathbf{P} \times \boldsymbol{\chi})
      + \mathbf{P} \cdot (\mathbb{M}_{\rm c}^{-1} \mathbf{P})
    } + m \mathrm{g} l \boldsymbol{\chi} \cdot \mathbf{\Gamma}
\end{equation}
with
\begin{equation*}
  \mathbb{I}_{\rm c} \defeq \diag\parentheses{ I_{1} - \frac{m^{2}l^{2}}{\rho},\, I_{1} - \frac{m^{2}l^{2}}{\rho},\, I_{3} },
  \qquad
  \mathbb{M}_{\rm c} \defeq \diag\parentheses{ \rho - \frac{m^{2}l^{2}}{I_{1}},\, \rho - \frac{m^{2}l^{2}}{I_{1}},\, \rho }
\end{equation*}
Then the equations of motion are given by the Lie--Poisson equation
\begin{equation*}
  \dot{\mu} = \PB{\mu}{h_{\rm c}}_{-}
\end{equation*}
with $\mu = (\mathbf{\Pi},\mathbf{P},\mathbf{\Gamma}) \in \parentheses{ \se(3) \ltimes \R^{3} }^{*}$ and the following $(-)$-Lie--Poisson bracket on $\parentheses{ \se(3) \ltimes \R^{3} }^{*}$: For any smooth $f,g\colon (\se(3) \ltimes \R^{3})^{*} \to \R$,
\begin{equation} 
  \label{eq:CLPB-htmb}
  \begin{split}
    \PB{f}{g}_{-}(\mathbf{\Pi},\mathbf{P},\mathbf{\Gamma}) &= 
    - \anglebrackets{ \mathbf{\Pi}, \pd{f}{\mathbf{\Pi}} \times \pd{g}{\mathbf{\Pi}} } - \anglebrackets{ \mathbf{P}, \pd{f}{\mathbf{\Pi}} \times \pd{g}{\mathbf{P}} - \pd{g}{\mathbf{\Pi}} \times \pd{f}{\mathbf{P}} } \\
    &\quad - \anglebrackets{ \mathbf{\Gamma}, \pd{f}{\mathbf{\Pi}} \times \pd{g}{\mathbf{\Gamma}} - \pd{g}{\mathbf{\Pi}} \times \pd{f}{\mathbf{\Gamma}} },
  \end{split}
\end{equation}
which, incidentally, is identical to the Lie--Poisson bracket  given in  \citet{ThMo1998} for a rigid body insulator that is acted on by an electric field as well as gravity (see also \citet{ThMo2000, ThMo2001}).
More explicitly, we have
\begin{equation}
  \label{eq:CLP-htmb2}
  \dot{\bPi} = \bPi \times \pd{h_{\rm c}}{\bPi} + \bP \times \pd{h_{\rm c}}{\bP} + \bGamma \times \pd{h_{\rm c}}{\bGamma},
  \qquad
  \dot{\bP} = \bP \times \pd{h_{\rm c}}{\bPi},
  \qquad
  \dot{\bGamma} = \bGamma \times \pd{h_{\rm c}}{\bPi}.
\end{equation}

Noting that \eqref{eq:CLPB-htmb} is a $(-)$-Lie--Poisson bracket~\eqref{eq:LPB-}, we find that the corresponding structure constants $\{c^{k}_{ij}\}_{1\le i,j,k \le 9}$ satisfy
\begin{equation*}
  \mu_{k} c^{k}_{ij} =
  -\begin{bmatrix}
    \hat{\mathbf{\Pi}} & \hat{\mathbf{P}} & \hat{\mathbf{\Gamma}} \\
    \hat{\mathbf{P}} & 0 & 0  \\
    \hat{\mathbf{\Gamma}} & 0 & 0 
  \end{bmatrix}.
\end{equation*}
One can also show that the Lie--Poisson bracket~\eqref{eq:CLPB-htmb} possesses the following Casimirs:
\begin{equation}
  \label{eq:Casimirs-htmb}
  f_{1} = \norm{\mathbf{P}}^{2},
  \qquad
  f_{2} = \mathbf{P} \cdot \mathbf{\Gamma},
  \qquad
  f_{3} = \norm{\mathbf{\Gamma}}^{2}.
\end{equation}
Furthermore, we can write the momentum map $\mathbf{M}^{-}$ as
\begin{equation*}
  \mathbf{M}^{-}(q, p) = -(
  \mathbf{q}_{1} \times \mathbf{p}_{1} + \mathbf{q}_{2} \times \mathbf{p}_{2} + \mathbf{q}_{3} \times \mathbf{p}_{3},\,
  \mathbf{q}_{1} \times \mathbf{p}_{2},\,
  \mathbf{q}_{1} \times \mathbf{p}_{3}
  ),
\end{equation*}
where we used the identification $\mathfrak{g} = \se(3) \ltimes \R^{3} \cong \R^{3} \times \R^{3} \times \R^{3}$ and wrote $q = (\mathbf{q}_{1}, \mathbf{q}_{2}, \mathbf{q}_{3}) \in \mathfrak{g} \cong \R^{9}$ and $p = (\mathbf{p}_{1}, \mathbf{p}_{2}, \mathbf{p}_{3}) \in \mathfrak{g}^{*} \cong \R^{9}$ with $\mathbf{q}_{i}, \mathbf{p}_{i} \in \R^{3}$ for $i \in \{1,2,3\}$.
Defining the Hamiltonian $H\colon T^{*}(\se(3) \ltimes \R^{3}) \to \R$ as $H(q,p) = h_{\rm c}(\mathbf{M}^{-}(q,p))$, we have the canonized system~\eqref{eq:canonized}.

Let us find the other momentum map (invariant) $\mathbf{J}$.
Using \eqref{eq:mathfrak_h-characterization}, we find that $\mathfrak{h}$ is the 9-dimensional subalgebra of $\sym(18,\R)$ consisting of matrices of the form
\begin{equation*}
  \sigma =
  \begin{bmatrix}
    \sigma_{11} & \sigma_{12} \\
    \sigma_{12}^{T} & \sigma_{22}
  \end{bmatrix}
  \in \sym(18,\R)
\end{equation*}
with
\begin{gather*}
  \sigma_{12} \in \Span\left\{
    I_{9},
    \begin{bmatrix}
      0 & I_{3} & 0 \\
      0 & 0 & 0 \\
      0 & 0 & 0
    \end{bmatrix},
    \begin{bmatrix}
      0 & 0 & I_{3} \\
      0 & 0 & 0 \\
      0 & 0 & 0 \\
    \end{bmatrix}
  \right\},
  \\
  \sigma_{11} \in \Span\left\{
  \begin{bmatrix}
    I_{3} & 0 & 0 \\
    0 & 0 & 0 \\
    0 & 0 & 0
  \end{bmatrix},
  \begin{bmatrix}
    0 & I_{3} & 0 \\
    I_{3} & 0 & 0 \\
    0 & 0 & 0
  \end{bmatrix},
  \begin{bmatrix}
    0 & 0 & I_{3} \\
    0 & 0 & 0 \\
    I_{3} & 0 & 0
  \end{bmatrix}
  \right\},
  \\
  \sigma_{22} \in \Span\left\{
  \begin{bmatrix}
    0 & 0 & 0 \\
    0 & I_{3} & 0 \\
    0 & 0 & 0
  \end{bmatrix},
  \begin{bmatrix}
    0 & 0 & 0 \\
    0 & 0 & I_{3} \\
    0 & I_{3} & 0 \\
  \end{bmatrix},
  \begin{bmatrix}
    0 & 0 & 0 \\
    0 & 0 & 0 \\
    0 & 0 & I_{3}
  \end{bmatrix}
  \right\}.
\end{gather*}
Hence the components of the momentum map $\mathbf{J}$ are
\begin{equation}
  \label{eq:J-htmb}
  \begin{array}{lll}
    J_{0} \defeq p \cdot q, & J_{1} \defeq q_{1} p_{4} + q_{2} p_{5} + q_{3} p_{6}, & J_{2} \defeq q_{1} p_{7} + q_{2} p_{8} + q_{3} p_{9}, \medskip\\
    J_{3} \defeq q_{1}^{2} + q_{2}^{2} + q_{3}^{2}, & J_{4} \defeq q_{1} q_{4} + q_{2} q_{5} + q_{3} q_{6}, & J_{5} \defeq q_{1} q_{7} + q_{2} q_{8} + q_{3} q_{9}, \medskip\\
    J_{6} \defeq p_{4}^{2} + p_{5}^{2} + p_{6}^{2}, & J_{7} \defeq p_{4} p_{7} + p_{5} p_{8} + p_{6} p_{9}, & J_{8} \defeq p_{7}^{2} + p_{8}^{2} + p_{9}^{2}.
  \end{array}
  \quad
\end{equation}

We may now follow the proof of \Cref{thm:dual_pair} in \Cref{sec:proof-dual_pair} to show that there exists an open set $U$ that is dense in $T^{*}\mathfrak{g}$ on which $\mathbf{M}^{-}$ and $\mathbf{J}$ are submersions.
We also saw above that $\dim\mathfrak{h} = 9 = \dim\mathfrak{g}$.
Hence we have a dual pair as described in \Cref{thm:dual_pair}.

Following~\cite{CoOh-EPwithBSym1}, the parameters are chosen as follows: $M=0.44\,\mathrm{[kg]}$, $m=0.7\,\mathrm{[kg]}$, $I_{1} = I_{2} = 0.2\,\mathrm{kg \cdot m^{2}}$, $I_{3} = 0.24\,\mathrm{kg \cdot m^{2}}$, $l = 0.215\,\mathrm{[m]}$, $\mathrm{g} = 9.8\,\mathrm{[m/s^{2}]}$.
The parameter $\rho$ was chosen such that $\rho = 0.9m^{2}l^{2}/I_{1}$ to ensure stability of the upright position.
The initial condition is $\mathbf{\Omega}(0) = (0.1, 0.2, 0.1), \mathbf{v}(0) = \mathbf{0}$, and $\mathbf{\Gamma}(0) = (\cos \theta_{0} \sin \varphi_{0}, \sin \theta_{0} \sin \varphi_{0}, \cos \varphi_{0})$ with $\theta_{0} = \pi/3$ and $\varphi_{0} = \pi/20$.

To get the initial conditions for the canonized system, we set $\mathbf{q}_{1}(0) = \mathbf{\Gamma}(0) \times \mathbf{P}(0), \mathbf{p}_{1}(0) = (0,0,0)$ and solved $\mathbf{M}^{-} (q(0),p(0))=(\mathbf{\Pi}(0),\mathbf{P}(0),\mathbf{\Gamma}(0))$ for the remaining values $\mathbf{q}_{2}(0)$, $\mathbf{q}_{3}(0)$, $\mathbf{p}_{2}(0)$, $\mathbf{p}_{3}(0)$ of $(q(0),p(0))$.

We solved the canonized system using the $4^{\rm th}$ order Gauss--Legendre method, and also solved the Lie--Poisson system~\eqref{eq:CLP-htmb2} directly using the $4^{\rm th}$ order explicit Runge--Kutta method for comparison.

\Cref{fig:RK4vsIRK4-htmb} shows the time evolutions of the relative errors of the Hamiltonian $h_{\rm c}$ and the Casimirs $f_{1}, f_{2}, f_{3}$.
Just as in the Kida vortex case, we observe drifts in addition to oscillations in all the invariants for the explicit Runge--Kutta solution, whereas we see that the proposed collective integrator preserves these invariants: the Hamiltonian oscillates in a thin band, whereas the Casimirs are preserved exactly in theory.

The errors for the components of the momentum map $\mathbf{J}$ are shown in \Cref{fig:Killing_inv-htmb}.
Since all of them are quadratic in $(q,p)$, they are invariants of the Gauss--Legendre method, any error must be due to roundoff and/or the nonlinear solver used in each step.

\begin{figure}[htbp]
  \centering
  \subfigure[Controlled Hamiltonian $h_{\rm c}$ from \eqref{eq:h_c}]{
    \includegraphics[width=.45\linewidth]{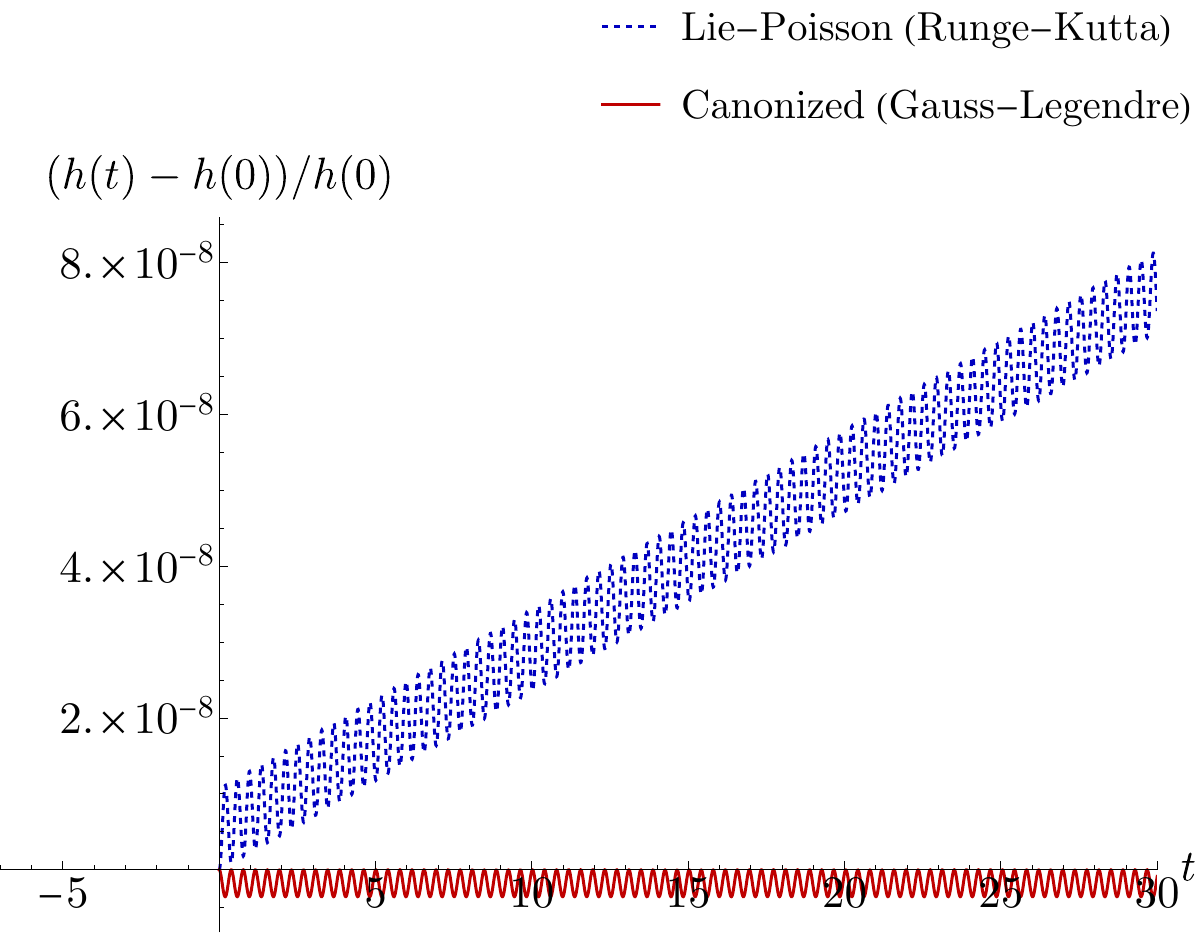}
  }
  \quad
  \subfigure[Casimir $f_{1}$ from \eqref{eq:Casimirs-htmb}]{
    \includegraphics[width=.45\linewidth]{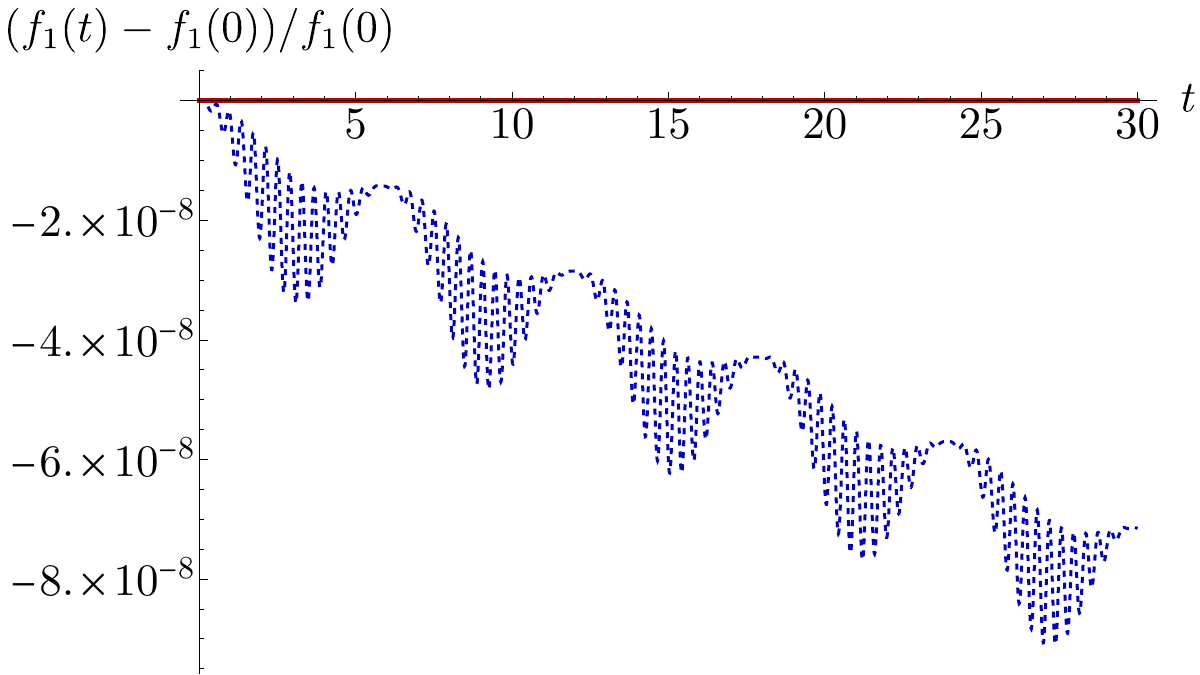}
  }
  \quad
  \subfigure[Casimir $f_{2}$ from \eqref{eq:Casimirs-htmb}]{
    \includegraphics[width=.45\linewidth]{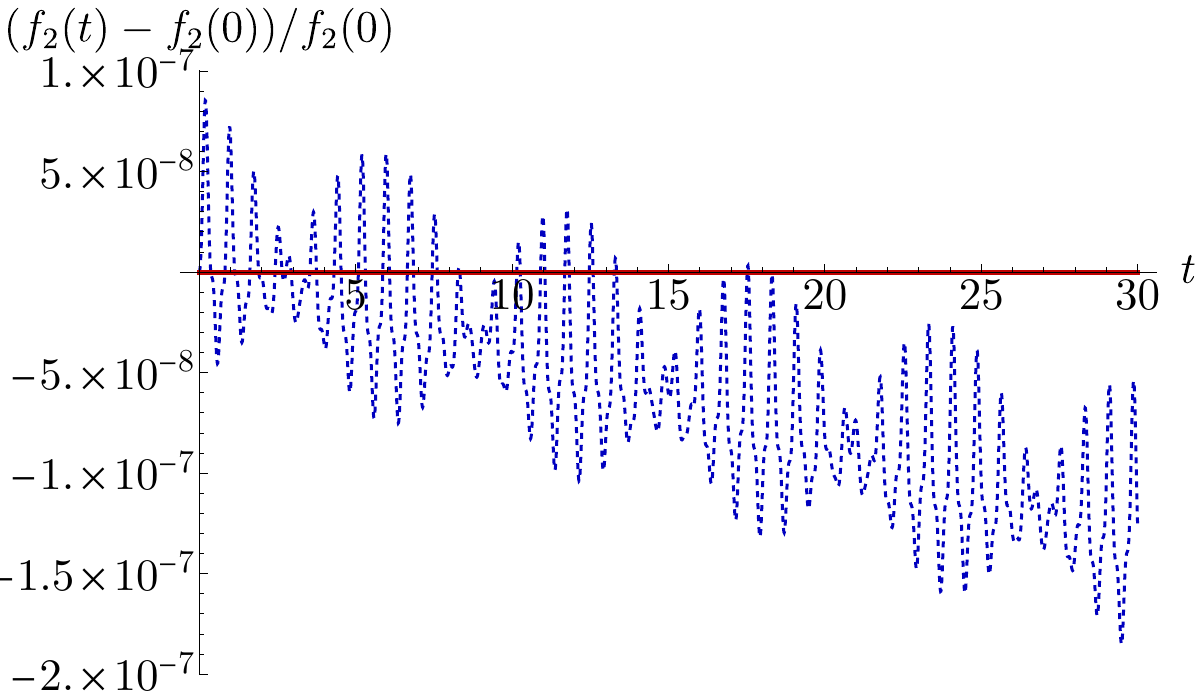}
  }
  \quad
  \subfigure[Casimir $f_{3}$ from \eqref{eq:Casimirs-htmb}]{
    \includegraphics[width=.45\linewidth]{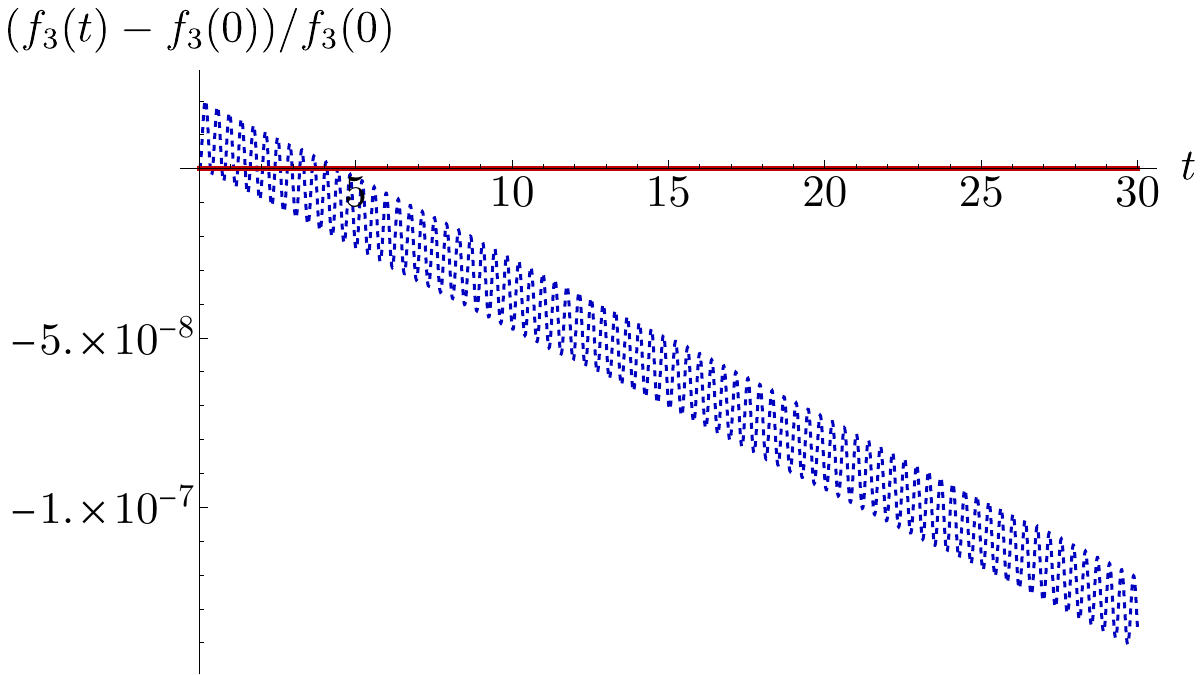}
  }
  \captionsetup{width=0.95\textwidth}
  \caption{
    Time evolutions of relative errors in Hamiltonian $h$ and three Casimirs $f_{1}$, $f_{2}$, $f_{3}$ from the heavy top on a movable base system.
    The dashed blue curve is the Runge--Kutta method directly applied to Lie--Poisson equation~\eqref{eq:CLP-htmb2} whereas the solid red curve is the $4^{\rm th}$ order Gauss--Legendre method applied to the canonized system.
    The solutions are shown for the time interval $0 \le t \le 30$ with time step $\Delta t = 0.01$.
    Note that, in (b)--(d), the red line is made thicker to make it visible; the actual variation is so small that it is barely visible if plotted with the same thickness as the blue line or as in (a).
  }
  \label{fig:RK4vsIRK4-htmb}
\end{figure}

\begin{figure}[htbp]
  \centering
  \subfigure[Component $J_{0}$]{
    \includegraphics[width=.31\linewidth]{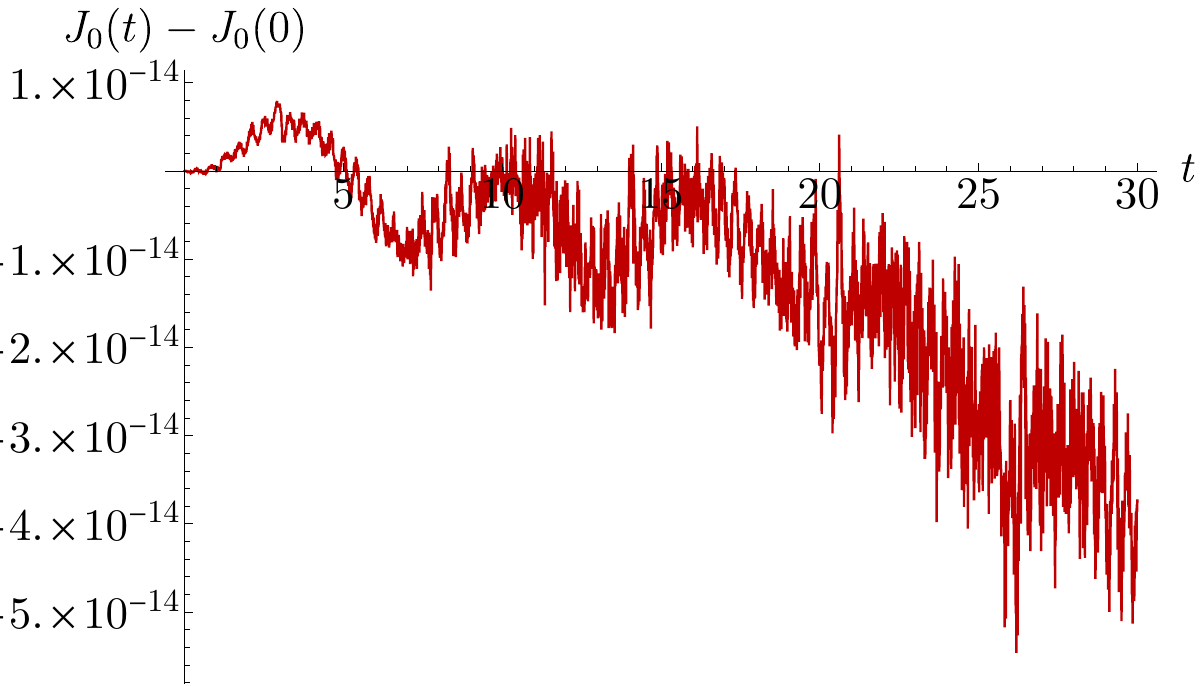}
  }
  \subfigure[Component $J_{1}$]{
    \includegraphics[width=.31\linewidth]{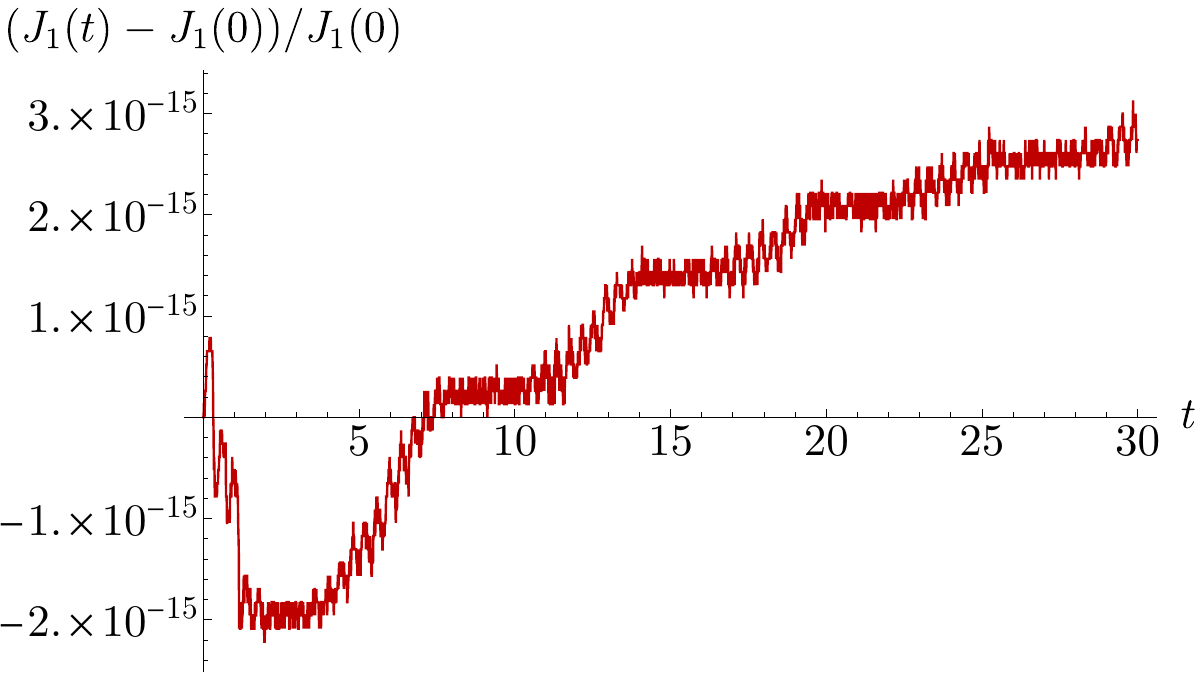}
  }
  \subfigure[Component $J_{2}$]{
    \includegraphics[width=.31\linewidth]{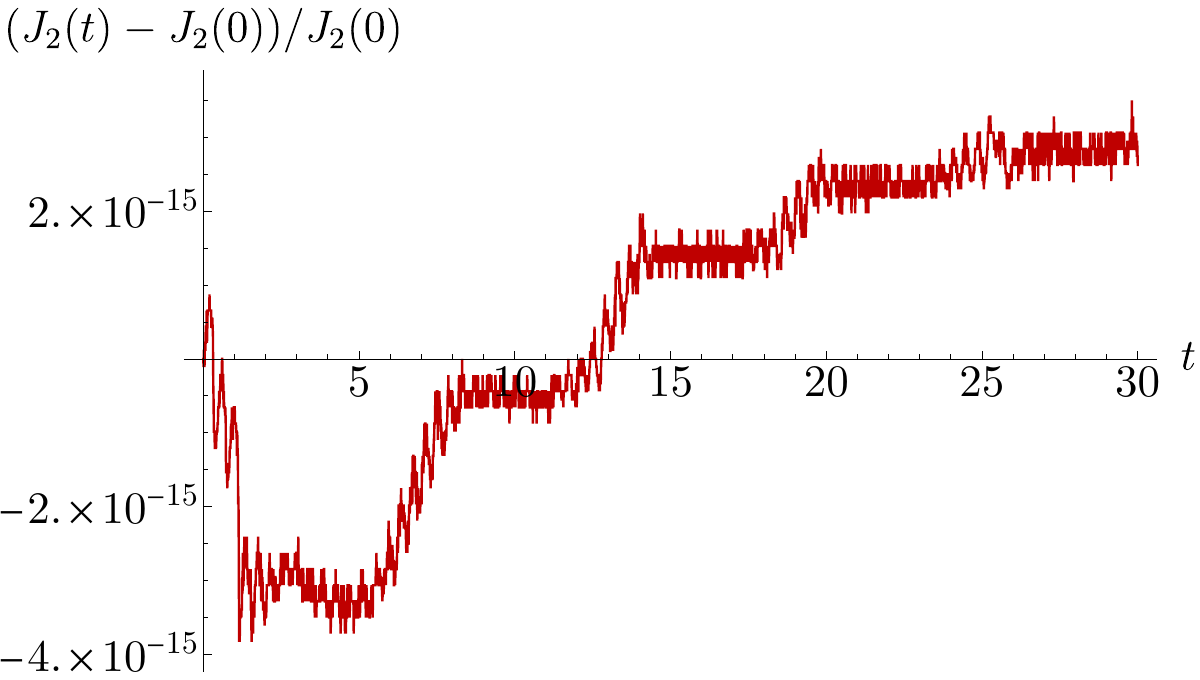}
  }
  \\
  \subfigure[Component $J_{3}$]{
    \includegraphics[width=.31\linewidth]{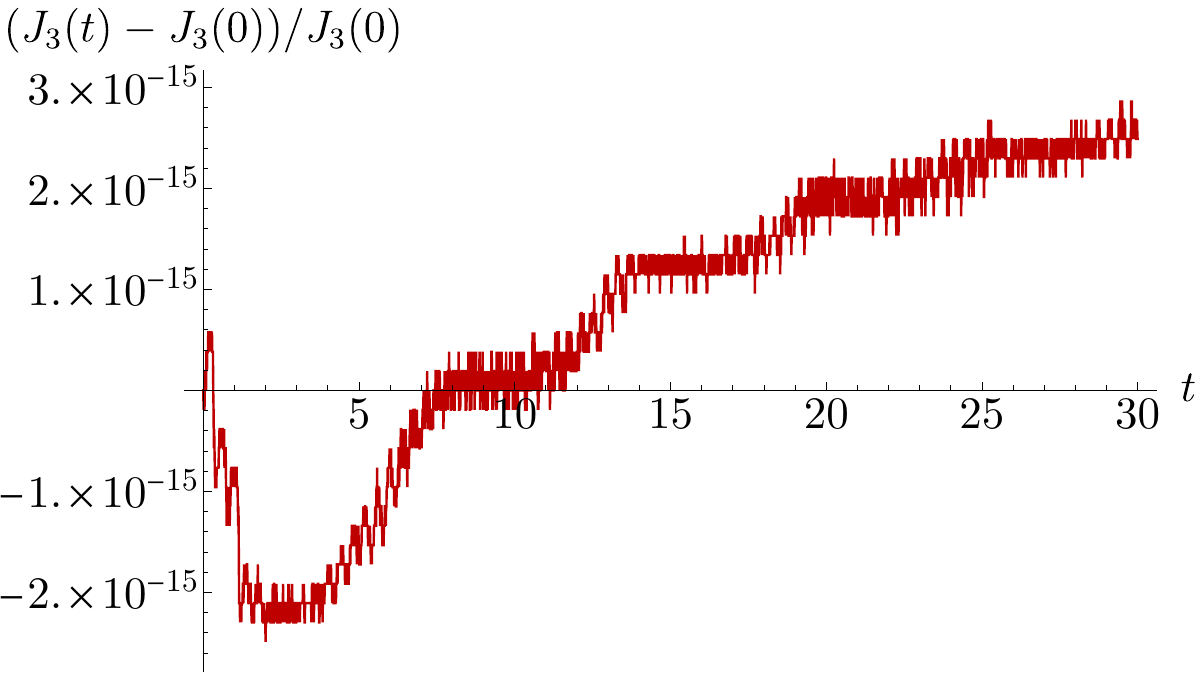}
  }
  \subfigure[Component $J_{4}$]{
    \includegraphics[width=.31\linewidth]{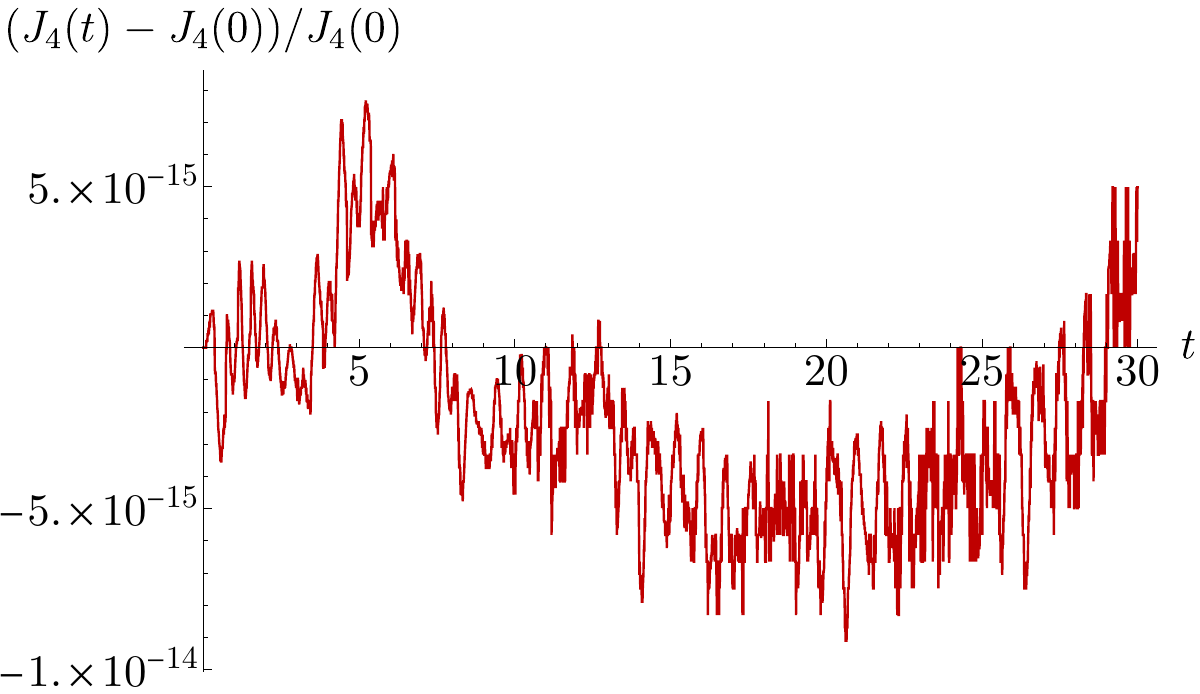}
  }
  \subfigure[Component $J_{5}$]{
    \includegraphics[width=.31\linewidth]{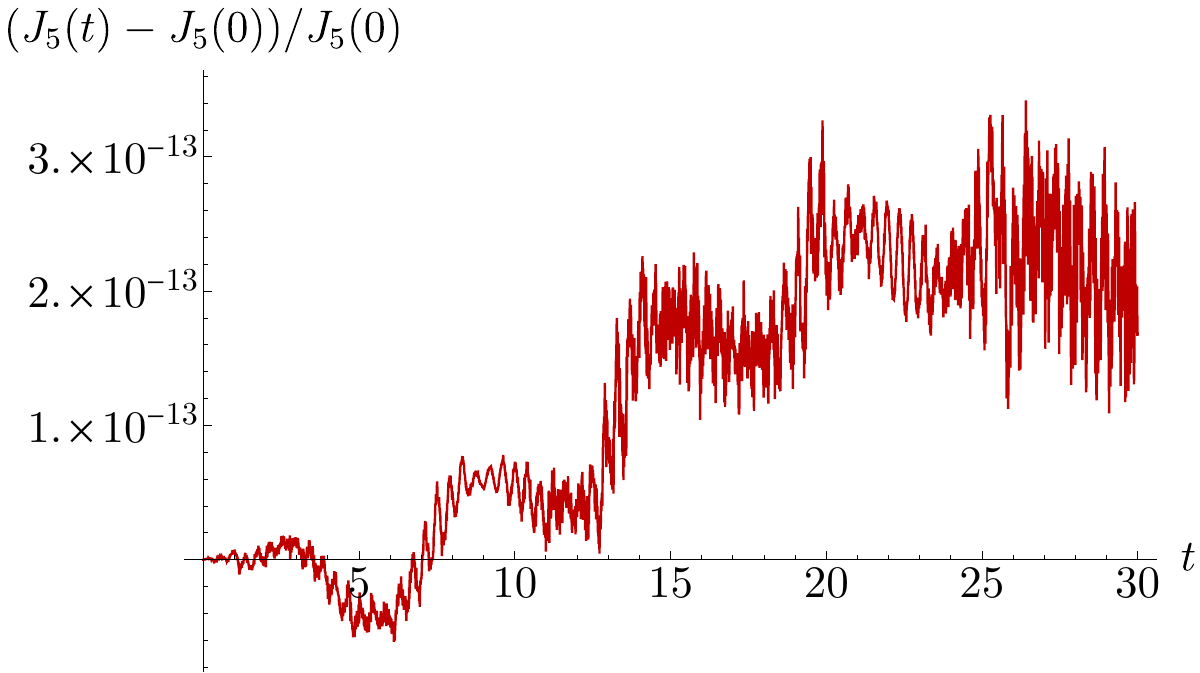}
  }
  \\
  \subfigure[Component $J_{6}$]{
    \includegraphics[width=.31\linewidth]{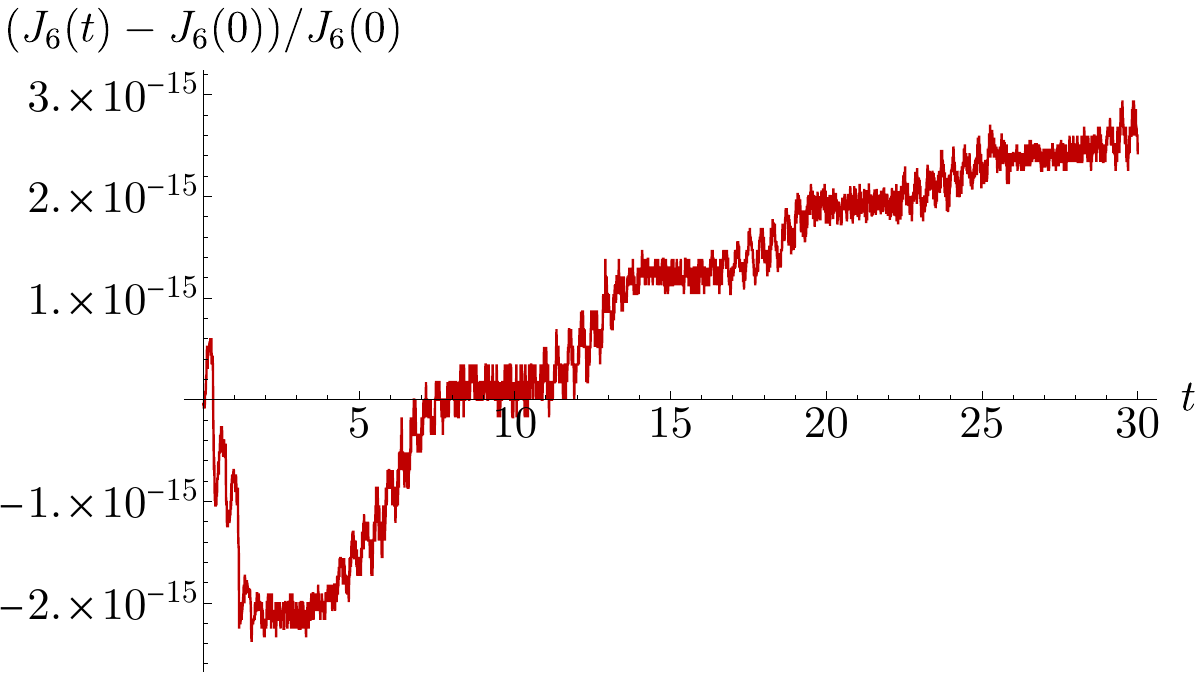}
  }
  \subfigure[Component $J_{7}$]{
    \includegraphics[width=.31\linewidth]{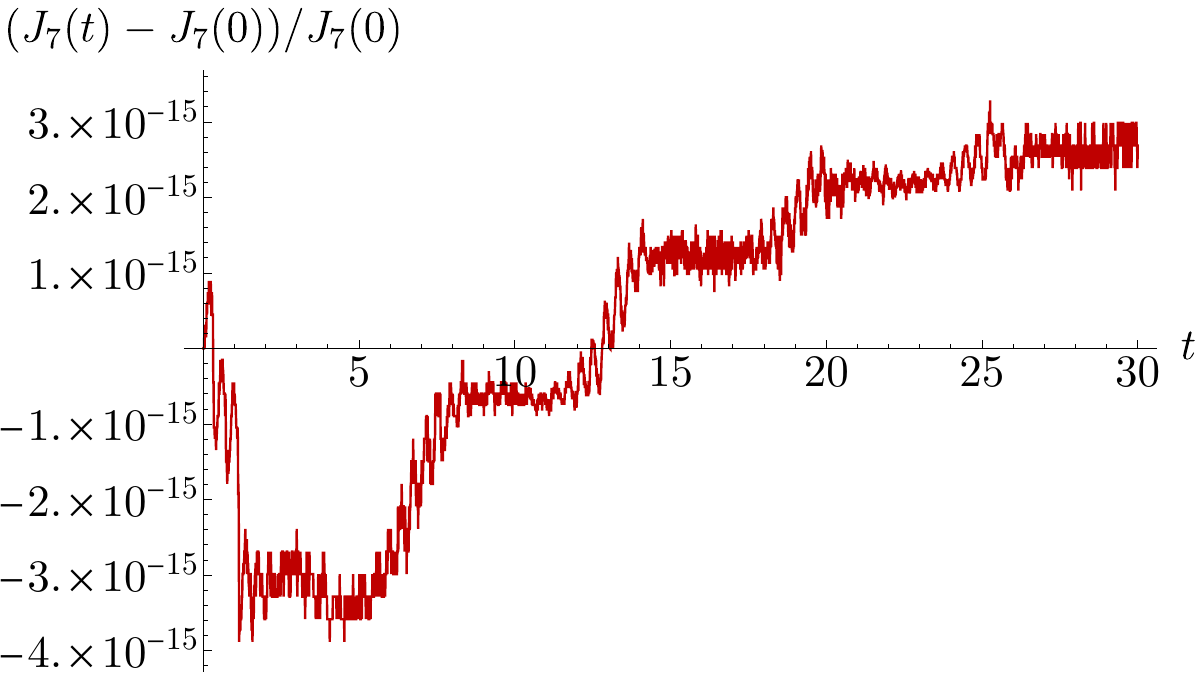}
  }
  \subfigure[Component $J_{8}$]{
    \includegraphics[width=.31\linewidth]{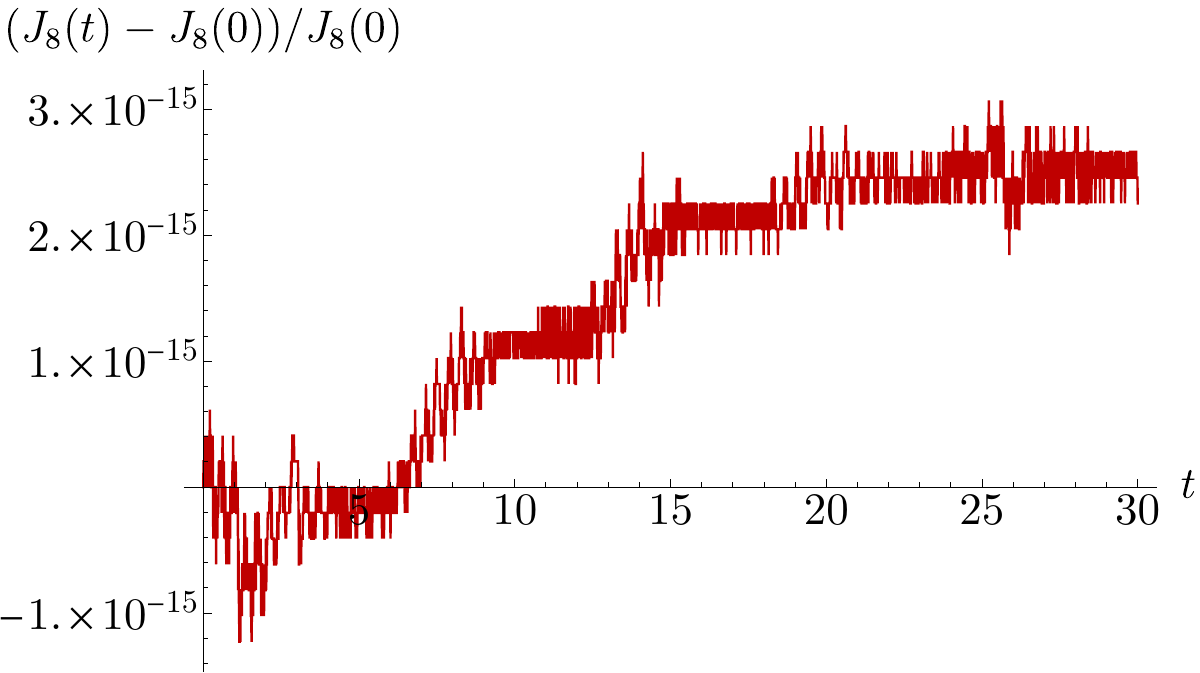}
  }
  \captionsetup{width=0.95\textwidth}
  \caption{
    Time evolutions of errors in components of momentum map $\mathbf{J}$ from \eqref{eq:J-htmb} computed by the $4^{\rm th}$ order Gauss--Legendre method applied to the canonized system for heavy top on a movable base.
    Note that we used the absolute error for $J_{0}$ because $J_{0}(0) = 0$, whereas all the others use relative errors.
    The solutions are shown for the time interval $0 \le t \le 30$ with time step $\Delta t = 0.01$.
  }
  \label{fig:Killing_inv-htmb}
\end{figure}

\section*{Acknowledgments}
We would like to thank the reviewers for their comments and suggestions, particularly the suggestion to find a dual pair.
BJ and TO were partially supported by NSF grants CMMI-1824798 and DMS-2006736.
PJM was supported by the DOE Office of Fusion Energy Sciences under DE-FG02-04ER-54742 and a Forschungspreis from the Alexander von Humboldt Foundation.

\appendix
\numberwithin{equation}{section}
\numberwithin{theorem}{section}
\section{Proof of \Cref{prop:nontriviality_of_h}}
\label{sec:proof-nontriviality_of_h}
Clearly $\sigma_{0}$ satisfies the condition in \eqref{eq:mathfrak_h-characterization}, and so $\sigma_{0} \in \mathfrak{h}$.

Let us next show that $\varkappa \in \mathfrak{h}$.
According to \eqref{eq:mathfrak_h-characterization}, it suffices to show $\mathcal{C}_{i} \kappa = -\kappa\mathcal{C}_{i}^{T}$ for any $i \in \{1, \dots, n\}$.
To that end, first recall that the Jacobi identity for the Lie bracket in $\mathfrak{g}$ is equivalent to the following relationship for the structure constants:
\begin{equation*}
  c^{m}_{il} c^{l}_{jk} + c^{m}_{jl} c^{l}_{ki} + c^{m}_{kl} c^{l}_{ij} = 0.
\end{equation*}
Using this identity, we see that, for any $i, j, l \in \{1, \dots, n\}$,
\begin{align*}
  (\mathcal{C}_{i} \kappa)_{jl}
  = (\mathcal{C}_{i})_{jk} \kappa_{kl}
  &= c^{k}_{ij} c^{r}_{km} c^{m}_{lr} \\
  &= (-c^{r}_{mk} c^{k}_{ij}) c^{m}_{lr} \\
  &= (c^{r}_{ik} c^{k}_{jm} + c^{r}_{jk} c^{k}_{mi}) c^{m}_{lr} \\
  &= c^{r}_{ik} c^{k}_{jm} c^{m}_{lr} + c^{r}_{jk} (-c^{k}_{im} c^{m}_{lr}) \\
  &= c^{m}_{ir} c^{r}_{jk} c^{k}_{lm} + c^{r}_{jk} (c^{k}_{lm} c^{m}_{ri} + c^{k}_{rm} c^{m}_{il}) \\
  &= -(c^{r}_{jk} c^{k}_{mr}) c^{m}_{il} \\
  &= -\kappa_{jm} c^{m}_{il} \\
  &= -\kappa_{jm} (\mathcal{C}_{i})_{lm} \\
  &= -(\kappa \mathcal{C}_{i}^{T})_{jl}.
\end{align*}

Finally, suppose that $\mathfrak{g}$ is semisimple.
Then the Killing form is non-degenerate, i.e., $\kappa$ is invertible, and so $\varkappa^{*}$ is defined.
Now, according to \eqref{eq:mathfrak_h-characterization}, it suffices to show  $\kappa^{-1} \mathcal{C}_{i} = -\mathcal{C}_{i}^{T} \kappa^{-1}$ for any $i \in \{1, \dots, n\}$.
But then this is equivalent to $\mathcal{C}_{i} \kappa = -\kappa\mathcal{C}_{i}^{T}$ that we have shown above.
Hence $\varkappa^{*} \in \mathfrak{h}$ as well.

\section{Proof of \Cref{thm:dual_pair}}
\label{sec:proof-dual_pair}
We prove it only for $\mathbf{M}^{+}$ because the same argument applies to $\mathbf{M}^{-}$ as well.
We break down the proof into a couple of lemmas on the properties of the momentum maps $\mathbf{M}^{+}$ and $\mathbf{J}$.
Note also that, throughout the proof, we identify $T^{*}\mathfrak{g}$ with $T^{*}\R^{n} \cong \R^{2n}$ using the standard bases $\{ E_{i} \}_{i=1}^{n}$ and $\{ E_{*}^{i} \}_{i=1}^{n}$ for $\mathfrak{g}$ and $\mathfrak{g}^{*}$, respectively.

\begin{lemma}
  \label{lem:W-TM}
  Define subspace
  \begin{equation}
    \label{eq:W}
    W(z) \defeq \Span\{ \mathbb{J}\mathcal{M}_{i} z \}_{i=1}^{n} \subset T_{z}T^{*}\mathfrak{g}.
    \quad
    \forall z \in T^{*}\mathfrak{g}.
  \end{equation}
  Then, $\ker T_{z}\mathbf{M}^{+}$ and $W(z)$ are symplectically orthogonal complements to each other, i.e.,
  \begin{equation*}
    \parentheses{ \ker T_{z}\mathbf{M}^{+} }^{\Omega} = W(z)
    \quad
    \forall z \in T^{*}\mathfrak{g}.
  \end{equation*}
\end{lemma}
\begin{proof}
  Let us first show that $W(z)$ and $\ker T_{z}\mathbf{M}^{+}$ are complementary in dimensions.
  To that end, let us write the tangent map of $\mathbf{M}^{+}\colon T^{*}\mathfrak{g} \to \mathfrak{g}^{*}$ using the components~\eqref{eq:M-components} for $\mathbf{M}^{+}$:
  \begin{equation}
    \label{eq:TM-A}
    T_{z}\mathbf{M}^{+} (\dot{z}) =
    \begin{bmatrix}
      \ip{ \d{M_{1}}(z) }{ \dot{z} } \\
      \vdots \\
      \ip{ \d{M_{n}}(z) }{ \dot{z} } \\
    \end{bmatrix} =
    \begin{bmatrix}
      z^{T}\mathcal{M}_{1} \dot{z} \\
      \vdots \\
      z^{T}\mathcal{M}_{n} \dot{z}
    \end{bmatrix}
    = A(z) \dot{z}
    \quad
    \text{with}
    \quad
    A(z) \defeq
    \begin{bmatrix}
      z^{T}\mathcal{M}_{1} \\
      \vdots \\
      z^{T}\mathcal{M}_{n}
    \end{bmatrix}
    \in \R^{n \times 2n}
  \end{equation}
  for any $z = (q,p) \in T\mathfrak{g}^{*}$ and any $\dot{z} = (\dot{q}, \dot{p}) \in T_{z}T^{*}\mathfrak{g}$.
  Therefore, we find
  \begin{equation*}
    \ker T_{z}\mathbf{M}^{+} = \ker A(z).
  \end{equation*}
  Now, by the fundamental theorem of linear algebra, we see that
  \begin{equation*}
    \im\parentheses{ A(z)^{T} } = \Span\{ \mathcal{M}_{i} z \}_{i=1}^{n} \subset T_{z}T^{*}\mathfrak{g} \cong \R^{2n}
  \end{equation*}
  gives a complementary subspace to $\ker T_{z}\mathbf{M}^{+}$ in $T_{z}T^{*}\mathfrak{g} \cong \R^{2n}$, i.e.,
  \begin{equation*}
    \dim \parentheses{ \im\parentheses{ A(z)^{T} } } + \dim\!\parentheses{ \ker T_{z}\mathbf{M}^{+} } = 2n.
  \end{equation*}
  But then, since $\mathbb{J}$ is non-degenerate, we see that $\dim W(z) = \dim \parentheses{ \im\parentheses{ A(z)^{T} } }$.
  Therefore,
  \begin{equation*}
    \dim W(z) + \dim\!\parentheses{ \ker T_{z}\mathbf{M}^{+} } = 2n
    \quad \forall z \in T^{*}\mathfrak{g}.
  \end{equation*}

  It remains to show that $W(z)$ is symplectically orthogonal to $\ker T_{z}\mathbf{M}^{+}$.
  Let $\dot{z} \in \ker T_{z}\mathbf{M}^{+} = \ker A(z)$ be arbitrary.
  Then $z^{T} \mathcal{M}_{i} \dot{z} = 0$ for any $i \in \{1, \dots, n\}$, but then this implies
  \begin{align*}
    \Omega\parentheses{ \mathbb{J}\mathcal{M}_{i} z, \dot{z} }
    &= (\mathbb{J}\mathcal{M}_{i} z)^{T} \mathbb{J} \dot{z} \\
    &= z^{T} \mathcal{M}_{i} \dot{z} \\
    &= 0. \qedhere
  \end{align*}
\end{proof}

Let us next prove some properties of the other momentum map $\mathbf{J}$ in the pair.
\begin{lemma}
  \label{lem:J}
  The momentum map $\mathbf{J}\colon T^{*}\mathfrak{g} \to \mathfrak{h}^{*}$ satisfies the following for any $z \in T^{*}\mathfrak{g}$:
  \begin{enumerate}
  \item $J_{\sigma}(z) \defeq \ip{ \mathbf{J}(z) }{ \sigma } = \frac{1}{2}z^{T} \sigma z$ for any $\sigma \in \mathfrak{h}$; \smallskip
    \label{part:J}
  \item $W(z) \subset \ker T_{z}\mathbf{J}$,
    \label{part:WinTJ}
  \end{enumerate}
  where $W(z)$ is the subspace of $T_{z}T^{*}\mathfrak{g}$ defined in \eqref{eq:W}.
\end{lemma}
\begin{proof}
  Let us first show \eqref{part:J}.
  By the definition of momentum map, we seek $J_{\sigma}(\,\cdot\,) \defeq \ip{ \mathbf{J}(\,\cdot\,) }{ \sigma }\colon T^{*}\mathfrak{g} \to \R$ satisfying, for any $\sigma \in \mathfrak{h}$ and any $z \in T^{*}\mathfrak{g}$,
  \begin{equation*}
    \sigma_{T^{*}\mathfrak{g}}(z) = X_{J_{\sigma}}(z),
  \end{equation*}
  where $X_{J_{\sigma}}(z)$ is the Hamiltonian vector field defined by $J_{\sigma}$, i.e.,
  \begin{equation*}
    X_{J_{\sigma}}(z) = \mathbb{J} \nabla J_{\sigma}(z).
  \end{equation*}
  Therefore, we have $\sigma z = \nabla J_{\sigma}(z)$, and thus $J_{\sigma}(z) =  \frac{1}{2}z^{T} \sigma z$.
  
  For \eqref{part:WinTJ}, it suffices to show that $W(z) \subset \ker \d{J_{\sigma}}(z)$ for any $z \in T^{*}\mathfrak{g}$ and any $\sigma \in \mathfrak{h}$.
  First recall from \eqref{eq:mathfrak_h} that $\sigma \in \mathfrak{h}$ if and only if $\sigma \mathbb{J} \mathcal{M}_{i}$ is skew-symmetric for any $i \in \{1, \dots, n\}$.
  Therefore, we see that, for any $i \in \{1, \dots, n\}$,
  \begin{equation*}
    \ip{ \d{J_{\sigma}}(z) }{ \mathbb{J} \mathcal{M}_{i} z } = z^{T} \sigma \mathbb{J} \mathcal{M}_{i} z = 0.
  \end{equation*}
  Since $W(z)$ is spanned by $\{ \mathbb{J}\mathcal{M}_{i} z \}_{i=1}^{n}$, we have $W(z) \subset \ker \d{J_{\sigma}}(z)$.
\end{proof}

We are now ready to prove \Cref{thm:dual_pair}.

\begin{enumerate}[(i)]
\item This is \Cref{lem:J}~\eqref{part:J}.
\item In order to find the open subset $U \subset T^{*}\mathfrak{g}$, notice first that $T_{z}\mathbf{M}^{+} = A(z)$ (see \eqref{eq:TM-A}) is full-rank if and only if $A(z) A(z)^{T}$ is non-singular.
  However,
  \begin{equation*}
    A(z) A(z)^{T}
    = \begin{bmatrix}
      z^{T}\mathcal{M}_{1} \\
      \vdots \\
      z^{T}\mathcal{M}_{n}
    \end{bmatrix}
    \begin{bmatrix}
      \mathcal{M}_{1} z \dots \mathcal{M}_{n} z
    \end{bmatrix}
    = \begin{bmatrix}
      \norm{ \mathcal{M}_{1} z }^{2} & \dots & z^{T} \mathcal{M}_{1} \mathcal{M}_{n} z \\
      \vdots & \ddots & \vdots \\
      z^{T} \mathcal{M}_{n} \mathcal{M}_{1} z  & \dots & \norm{ \mathcal{M}_{n} z }^{2}
    \end{bmatrix}.
  \end{equation*}
  Since each entry is quadratic in $z$, the function $d_{A}(z) \defeq \det\parentheses{ A(z) A(z)^{T} }$ is a polynomial of $z$ as well.
  Hence the pre-image $U_{1} \defeq d_{A}^{-1}(\R\backslash\{0\}) \subset T^{*}\mathfrak{g}$ is an open set on which $T\mathbf{M}^{+}$ is full-rank.
  Let $\{ \sigma_{j} \}_{j=0}^{m-1}$ be a basis for $\mathfrak{h}$, and define
  \begin{equation*}
    B(z) \defeq
    \begin{bmatrix}
      z^{T}\sigma_{0} \\
      \vdots \\
      z^{T}\sigma_{m-1}
    \end{bmatrix}.
  \end{equation*}
  Then, running the same argument with $B(z)$ in place of $A(z)$, one can find $U_{2} \defeq d_{B}^{-1}(\R\backslash\{0\})$ on which $T\mathbf{J}$ is full-rank.
  Then $U \defeq U_{1} \cap U_{2}$ gives the desired open set.
\item Note first that $\mathbf{M}^{+}$ and $\mathbf{J}$ being submersions imply that they are submersions onto open sets $\mathbf{M}^{+}(U) \subset \mathfrak{g}^{*}$ and $\mathbf{J}(U) \subset \mathfrak{h}^{*}$, respectively.
  Therefore, we have, for any $z \in U$,
  \begin{equation*}
    \dim\!\parentheses{ \ker T_{z}\mathbf{M}^{+} }
    = 2n - \dim\!\parentheses{ \im T_{z}\mathbf{M}^{+} }
    = 2n - \dim\mathfrak{g},
  \end{equation*}
  and
  \begin{align*}
    \dim\!\parentheses{ \ker T_{z}\mathbf{J} }   
    = 2n - \dim\!\parentheses{ \im T_{z}\mathbf{J} }
    = 2n - \dim\mathfrak{h}.
  \end{align*}
  Thus the assumption $\dim\mathfrak{h} = \dim\mathfrak{g} = n$ implies that $\dim\!\parentheses{ \ker T_{z}\mathbf{M}^{+} } = \dim\!\parentheses{ \ker T_{z}\mathbf{J} } = n$ for any $z \in U$.
  However, using \Cref{lem:W-TM} and \Cref{lem:J}~\eqref{part:WinTJ}, we have
  \begin{equation*}
    \parentheses{ \ker T_{z}\mathbf{M}^{+} }^{\Omega} \subset \ker T_{z}\mathbf{J}
    \quad
    \forall z \in T^{*}\mathfrak{g}.
  \end{equation*}
  But then $\dim \parentheses{ \ker T_{z}\mathbf{M}^{+} }^{\Omega}= n = \dim\!\parentheses{ \ker T_{z}\mathbf{J} }$ for any $z \in U$, and thus we obtain
  \begin{equation*}
    \parentheses{ \ker T_{z}\mathbf{M}^{+} }^{\Omega} = \ker T_{z}\mathbf{J}
    \quad
    \forall z \in U.
  \end{equation*}
\end{enumerate}

\bibliography{Clebsch}
\bibliographystyle{plainnat}

\end{document}